\newcommand*\colourcheck[1]{%
  \expandafter\newcommand\csname #1check\endcsname{\textcolor{#1}{\ding{52}}}%
}
\long\def\comment#1{}
\newcommand{\caH}{{\mathcal H}}
\newcommand{\caP}{{\mathcal P}}
\newcommand{\bvar}[1]{\mathit{shVar}(#1)}
\newcommand{\wf}{\mathit{wf}}
\newcommand{\chV}[1]{{\textit #1}_?}
\newcommand{\var}[1]{\mathit{Vars}(#1)}
\newcommand{\Symbols}{\Sigma}
\newcommand{\sort}[1]{\ensuremath{\mathsf{#1}}}
\newcommand{\nI}[1]{\ensuremath{#1{\ \notin \ }{\mathcal I}}}
\newcommand{\inI}[1]{\ensuremath{#1{\ \in \ }{\mathcal I}}}
\newcommand{\ProcConf}{\mathit{ProcConf}} 
\newcommand{\LProc}{\mathit{LProc}} 
\newcommand{\ProcId}{\mathit{ProcId}} 
\newcommand{\Role}{\mathit{Role}} 
\newcommand{\Proc}{\mathit{Proc}}
\newcommand{\Msg}{\mathit{Msg}} 
\newcommand{\Real}{\mathit{Real}} 
\newcommand{\Nat}{\mathit{Nat}} 
\newcommand{\Cond}{\mathit{Cond}}
\newcommand{\nil}{\mathit{nilP}}
\newcommand{\PSymbols}{\Symbols_\caP}
\newcommand{\SpecPA}{\caP_\mathit{PA}}
\newcommand{\ProcPA}{P_\mathit{PA}}
\newcommand{\PEq}{E_{\caP}}
\newcommand{\PAEq}{E_\mathit{PA}}
\newcommand{\PAPEq}{E_\mathit{\textit{PA}_\caP}}
\newcommand{\PASymbols}{\Symbols_\mathit{{PA}_\caP}}
\newcommand{\SpecTPA}{\caP_\mathit{TPA}}
\newcommand{\ProcTPA}{P_\mathit{TPA}}
\newcommand{\TPAPEq}{E_\mathit{\textit{TPA}_\caP}}
\newcommand{\TPASymbols}{\Symbols_\mathit{{TPA}_\caP}}
\newcommand{\BNFSymbols}{\Symbols_\mathit{PA}}
\newcommand{\TBNFSymbols}{\Symbols_\mathit{TPA}}
\newcommand{\PAStateSymbols}{\Symbols_\mathit{{\textit{PA}_\caP+State}}}
\newcommand{\TPAStateSymbols}{\Symbols_\mathit{{\textit{TPA}_\caP+State}}}
\newcommand{\PARls}{R_{\mathit{\textit{PA}_\caP}}}
\newcommand{\TPARls}{R_{\mathit{\textit{TPA}_\caP}}}
\newcommand{\tpatopa}{\textit{tpa2pa}}
\newcommand{\tpatopax}{\textit{tpa2pa\!*}}
\newcommand{\eqname}[1]{\tag{#1}}
\newcommand{\MaxProcId}{\textit{id}}
\begin{document}

\title{Protocol Analysis with Time
\thanks{This paper was 
partially supported by the EU (FEDER) and the Spanish
MCIU under grant RTI2018-094403-B-C32,
by the Spanish Generalitat Valenciana under grant PROMETEO/2019/098 and APOSTD/2019/127, by the US Air Force Office of Scientific Research 
under award number FA9550-17-1-0286, and by ONR Code 311.}
}

\author{}
\author{
Dami\'an Aparicio-S\'anchez\inst{1}
\and       
Santiago Escobar\inst{1}
\and       
Catherine Meadows\inst{2}
\and       
Jos{\'e} Meseguer\inst{3}
\and Julia Sapi\~{n}a\inst{1}
}
\institute{
Universitat Polit\`ecnica de Val\`encia, Spain\\
       \email{\{daapsnc,sescobar,jsapina\}@upv.es}
\and       
Naval Research Laboratory, Washington DC, USA\\
       \email{meadows@itd.nrl.navy.mil}
\and       
University of Illinois at Urbana-Champaign, USA\\
       \email{meseguer@illinois.edu}
}
\date{}

\maketitle

\begin{abstract}
We present a framework suited to the analysis of cryptographic protocols that make use of time in their execution.   We provide a process algebra syntax that makes time information available to processes, and a   transition semantics that takes account of fundamental properties of time. Additional properties can be added by the user if desirable. This timed protocol framework can be implemented either as a simulation tool or as a symbolic analysis tool
in which time references are represented by logical variables,  and  in which the properties of time are implemented as constraints on those time logical variables.  These constraints are carried along the symbolic execution of the protocol.  The satisfiability of these constraints can be evaluated as the analysis proceeds, so attacks that violate the laws of physics can be rejected as impossible.    
We demonstrate the feasibility of our approach by using the Maude-NPA protocol analyzer  together with an SMT solver that is used  to evaluate the satisfiability of timing constraints. We  provide a sound and complete protocol transformation from our timed process algebra to the Maude-NPA syntax and semantics, and we prove its soundness and completeness.   We then use the tool to  analyze  Mafia fraud and distance hijacking attacks on a suite of  distance-bounding protocols.

 \end{abstract}

\section{Introduction}

Time is an important aspect of many cryptographic protocols, and there has been increasing interest in the formal analysis of protocols that use time. Model checking of protocols that use time can be done using either an explicit time model, or by using an untimed model and showing it is sound and complete with respect to a timed model. The former is more intuitive for the user, but the latter is often chosen because not all cryptographic protocol analysis tools support reasoning about time.  In this paper we describe a solution  that combines the  advantages of both approaches.  An explicit timed specification language is developed with a timed  syntax and semantics, and is automatically translated to an existing untimed language.  The user however writes protocol specifications and queries in the timed language.  In this paper  we describe how such an approach has been applied to the Maude-NPA tool by taking advantage of its built-in support for constraints.  We believe that this approach can be applied to other tools that support  constraint handling as well.  

There are a number of security protocols that make use of time.  In general, there are two types: those that make use of assumptions about time, most often assuming some sort of loose synchronization, and those that guarantee these assumptions.    The first kind includes protocols such as Kerberos~\cite{NYHR05}, which uses timestamps to defend against replay attacks, the TESLA protocol~\cite{PSCT+05}, which relies on loose synchronization to amortize digital signatures, and  blockchain protocols, which use timestamps to order blocks in the chain. 
The other kind provides guarantees based on physical properties of time: for example, distance bounding, which guarantees that a prover is within a certain distance of a verifier, and secure time synchronization, which guarantees that  the clocks of two different nodes are synchronized within a certain margin of error. In this paper, we concentrate on protocols using distance bounding, both because it has been well-studied, and because the timing constraints are relatively simple. 

A number of approaches have been applied to the analysis of distance bounding protocols.  
In \cite{MPPC+07}, an epistemic logic for distance bounding analysis is presented where timing is captured by means of \emph{timed channels}, which are described axiomatically. Time of sending and receiving  messages can be deduced by using these timed channel axioms.  
In \cite{BCSS11}, Basin et al. define a formal model for reasoning about physical properties of security protocols, including timing and location, which they formalize in Isabelle/HOL and use it to analyze several distance bounding protocols, by applying a technique similar to Paulson's inductive approach \cite{Paulson98}.  
In \cite{DD19}, Debant et al. develop a timing model for AKiSS, a tool for verifying protocol equivalence in the bounded session model, and use it to analyze distance bounding protocols.  
In \cite{NTU18}, Nigam et al. develop a model of timing side channels in terms of constraints  and use it to define a timed  version of observational equivalence for protocols.  They have developed a tool for verifying observational equivalence that relies on SMT solvers.   Other work concentrates on simplifying the problem so it can be more easily analyzed by a model checker,  but proving that the simple problem is sound and complete with respect to the original problem so that the analysis is useful.   In this regard, Nigam et al.~\cite{NTU16}  and  Debant et al.~\cite{DDW18} show that it is safe to  limit the size and complexity of the topologies, and Mauw et al.~\cite{MSTT18} and Chothia et al.~\cite{CRS18} develop timed and untimed models and show that analysis in the untimed model is sound and complete with respect to the timed model.   

 In this paper we illustrate our approach  by developing a timed protocol semantics suitable for the analysis of protocols that use constraints on time and distance, such as distance bounding,  and that  can be implemented as either a simulation tool for generating and checking concrete configurations, or as a symbolic analysis tool  that allows the exploration of all relevant configurations.     
We realize the timed semantics by translating it into the semantics of the Maude-NPA protocol analysis tool, in which timing properties are expressed as constraints.  The constraints generated during the Maude-NPA search are then checked using an SMT solver.
  
There are several  things that help us.  
One is that
we consider a metric space with distance constraints.
Many tools support constraint handling,
e.g., Maude-NPA~\cite{EMMS15} and Tamarin~\cite{MSCB13}.
Another is that time can be naturally added to a process algebra.
Many tools support processes, 
e.g., Maude-NPA~\cite{YEMM+16} and AKISS~\cite{DD19}.

The rest of this paper is organized as follows.
In Section \ref{sec:examples}, we recall the Brands-Chaum protocol, which is used as the running example throughout the paper. 
In Section~\ref{sec:process-time}, we present the timed process algebra with its intended semantics. 
In Section~\ref{sec:simple}, we present a sound and complete protocol transformation from our timed process algebra to an untimed process algebra. 
In Section~\ref{sec:byhand}, we show how our timed process algebra 
can be transformed into Maude-NPA strand notation. 
In Section~\ref{sec:experiments}, we present our experiments. 
We conclude in Section~\ref{sec:conclusions}.

\section{The Brands-Chaum distance bounding protocol}\label{sec:examples}

In the following, we recall the Brands-Chaum distance bounding protocol of \cite{BC93}, which we will use as the running example for the whole paper.

\begin{example}\label{ex:brands-and-chaum}
The Brands-Chaum protocol specifies communication  between a verifier V and a prover P.  P needs to authenticate itself to V, and also needs to prove that it is within a distance ``d" of it. 
$X {;} Y$ denotes concatenation of two messages $X$ and $Y$,
$\textit{commit}(N,Sr)$ denotes commitment of  secret $Sr$ with a nonce $N$,  
$\textit{open}(N,Sr,C)$ denotes opening a commitment $C$ using the nonce $N$ and checking whether it carries the secret $Sr$,
$\oplus$ is the exclusive-or operator, and $\textit{sign}(A,M)$ denotes  $A$ signing message $M$. A typical interaction between the prover and the verifier is as follows:

\noindent
{\small%
\begin{align*}
P \rightarrow V &:\textit{commit}(N_P,S_P) \\[-1ex]
& \mbox{//The prover sends his name and a commitment}\\
V \rightarrow P &:N_V \\[-1ex]
& \mbox{//The verifier sends a nonce}\\[-1ex]
& \mbox{//and records the time when this message was sent}\\[-1ex]
P \rightarrow V &: N_P \oplus N_V \\[-1ex]
& \mbox{//The verifier checks the answer of this exclusive-or}\\[-1ex]
& \mbox{//message arrives within two times a fixed distance}\\
P \rightarrow V &: S_P \\[-1ex]
& \mbox{//The prover sends the committed secret}\\[-1ex]
& \mbox{//and the verifier checks $\textit{open}(N_P,S_P,\textit{commit}(N_P,S_P))$}\\
P \rightarrow V &: \mathit{sign}_P(N_V ; N_P \oplus N_V) \\[-1ex]
& \mbox{//The prover signs the two rapid exchange messages}
\end{align*}
}
\end{example}

\noindent
The previous informal Alice\&Bob notation can be naturally extended to include time.
We consider wireless communication between the participants located at an arbitrary given topology 
(participants do not move from their assigned locations)
with distance constraints, 
where time and distance are equivalent for simplification and are represented by a real  number.
We  assume a metric space with a distance function $d: A \times A \to \Real$ from a set $A$ of participants
such that $d(A,A)=0$, $d(A,B)=d(B,A)$, and $d(A,B) \leq d(A,C) + d(C,B)$. 
Then, time information is added to the protocol. 
First, we add the time when a message was sent or received as a subindex $P_{t_1} \to V_{t_2}$.
Second, time constraints associated to the metric space are added:
(i) the sending and receiving times of a message differ by the distance between them
and
(ii) the time difference between two consecutive actions of a participant must be greater or equal to zero. 
Third, the distance bounding constraint of the verifier is represented as an arbitrary distance $d$.
Time constraints 
are written using 
quantifier-free formulas
in linear real arithmetic.
For convenience, in linear
equalities and inequalities
(with $<$, $\leq$, $>$ or $\geq$),
we 
allow both
$2*x = x + x$ and the monus function
$x \dot{-} y =
\textit{if}\ y < x\ \textit{then}\ x - y\ \textit{else}\ 0$
as definitional
extensions.

In the following timed sequence of actions, a vertical bar is included to differentiate between the process and some constraints associated to the metric space. We remove the constraint $\textit{open}(N_P,S_P,\textit{commit}(N_P,S_P))$ for simplification.\\

{\small\noindent
\[
\begin{array}{@{}r@{}l@{\;}l@{}}
P_{t_1} \rightarrow V_{t'_1} &\::\: \textit{commit}(N_P,S_P) 
& \mid t'_1 = t_1 + d(P,V)\\
V_{t_2} \rightarrow P_{t'_2} &\::\: N_V & \mid t'_2 = t_2 + d(P,V) \wedge t'_1 \geq t'_1 \\
P_{t_3} \rightarrow V_{t'_3} &\::\: N_P \oplus N_V & \mid t'_3 = t_3 + d(P,V) \wedge t_3 \geq t'_2 \\
V & \::\: t'_3\: \dot{-}\: t_2 \leq 2*d\\
P_{t_4} \rightarrow V_{t'_4} &\::\: S_P & \mid t'_4 = t_4 + d(P,V)\wedge t_4 \geq t_3 \wedge t'_4 \geq t'_3\\
P_{t_5} \rightarrow V_{t'_5} &\::\: \mathit{sign}_P(N_V ; N_P \oplus N_V) & \mid t'_5 = t_5 + d(P,V) \wedge t_5 \geq t_4 \wedge t'_5 \geq t'_4\\
\end{array}
\]%
}

The Brands-Chaum protocol is designed to defend against mafia frauds, where an honest prover is outside the neighborhood of the verifier 
(i.e., $d(P,V) > d$)
but an intruder is inside (i.e., $d(I,V) \leq d$), pretending to be the honest prover.
The following is an example of an \emph{attempted} mafia fraud,
in which 
the intruder simply forwards messages back and forth between the prover and the verifier.
We write $I(P)$ to denote an intruder pretending to be an honest prover $P$.\\

{\small\noindent
\[
\begin{array}{@{}r@{}r@{}l@{}l@{}l@{}}
P_{ t_1} &\rightarrow &I_{t_2} &: \textit{commit}(N_P,S_P) 
& \mid t_2 = t_1 + d(P,I)\\
I(P)_{t_2} &\rightarrow &V_{t_3}&:  \textit{commit}(N_P,S_P) 
& \mid t_3 = t_2 + d(V,I)\\
V_{t_3} &\rightarrow &I(P)_{t_4} &: N_V & \mid t_4 = t_3 + d(V,I)\\
I_{t_4} &\rightarrow &P_{t_5} &: N_V & \mid t_5 = t_4 + d(P,I)\\
P_{t_5} &\rightarrow &I_{t_{6}} &: N_P \oplus N_V & \mid t_{6} = t_5 + d(P,I)\\
I(P)_{t_{6}} &\rightarrow &V_{t_{7}} &: N_P \oplus N_V & \mid t_{7} = t_{6} + d(V,I)\\
& & V & : t_{7} \dot{-} t_3 \leq 2*d\\
P_{t_{8}} &\rightarrow &I_{t_{9}} &: S_P & \mid t_{9} = t_{8} + d(P,I) \wedge t_8 \geq t_5\\
I(P)_{t_{10}} &\rightarrow &V_{t_{11}} &: S_P & \mid t_{11} = t_{10} + d(V,I)  \wedge t_{11} \geq t_7\\
I(P)_{t_{12}} &\rightarrow &V_{t_{13}} &: sign_P(N_V ; N_P \oplus N_V) & \mid t_{13} = t_{12} + d(V,I)  \wedge t_{13} \geq t_{11}
\end{array}
\]%
}

\noindent
Note that, in order for this trace to be consistent with the metric space, 
it would require that $2 * d(V,I) + 2* d(P,I) \leq 2*d$, which is unsatisfiable
by $d(V,P) > d > 0$ and the triangular inequality $d(V,P) \leq d(V,I) + d(P,I)$, which implies that the attack is not possible.

However, a distance hijacking 
attack is possible (i.e., the time and distance constraints are satisfiable)
where an intruder located outside the neighborhood of the verifier  (i.e., $d(V,I) > d$) 
succeeds in convincing the verifier that he is inside the neighborhood by 
exploiting the presence of an honest prover in the neighborhood (i.e., $d(V,P) \leq d$) to achieve his goal.
The following is an example of a \emph{successful} distance hijacking, 
in which 
the intruder listens to the exchanges messages between the prover and the verifier
but builds the last  message.\\

{\small\noindent
\[
\begin{array}{@{}r@{}l@{\;}l@{}}
P_{t_1} \rightarrow V_{t_2}~~~~~ &\::\: \textit{commit}(N_P,S_P) 
& \mid t_2 = t_1 + d(P,V)\\
V_{t_2} \rightarrow P_{t_3},I_{t'_3} &\::\: N_V & \mid t_3 = t_2 + d(P,V) \wedge t'_3 = t_2 + d(I,V)\\
P_{t_3} \rightarrow V_{t_4},I_{t'_4} &\::\: N_P \oplus N_V & \mid t_4 = t_3 + d(P,V) \wedge t'_4 = t_3 + d(I,V) \\
V~~~~~~~ & \::\: t_4\: \dot{-}\: t_2 \leq 2*d\\
P_{t_5} \rightarrow V_{t_6}~~~~~ &\::\: S_P & \mid t_6 = t_5 + d(P,V)\wedge t_5 \geq t_3 \wedge t_6 \geq t_4\\
I(P)_{t_7} \rightarrow V_{t_8}~~~~~ &\::\: \mathit{sign}_I(N_V ; N_P \oplus N_V) & \mid t_8 = t_7 + d(I,V) \wedge t_7 \geq t'_4 \wedge t_8 \geq t_6\\
\end{array}
\]%
}%

\section{A Timed Process Algebra}\label{sec:mnpa-process}\label{sec:process-time}
 
In this section,
we present our timed process algebra 
and its intended semantics.  We restrict ourselves to a semantics that can be used to reason about time and distance.  We discuss how this could be extended in Section \ref{sec:conclusions}. 
To illustrate our approach, we use Maude-NPA's  process algebra and semantics described in \cite{YEMM+16}, 
extending it with a global clock and time information.

\subsection{New Syntax for Time}
\label{sec:process-time-syntax}

In our timed protocol process algebra, the behaviors of both honest principals
and the intruders are represented  by \emph{labeled processes}. 
Therefore, a protocol is specified as a set of labeled processes.
Each process performs a sequence of actions, namely sending ($+m$) or receiving ($-m$)
a message $m$, but without knowing who actually sent or received it. 
Each process may also perform deterministic or non-deterministic choices. 
We define a protocol $\caP$ in the 
timed protocol process algebra, written $\SpecTPA$,
as a pair of the form $\SpecTPA = ((\TPASymbols,\TPAPEq),\allowbreak \ProcTPA)$,
where $(\TPASymbols,\TPAPEq)$ 
is the equational theory specifying the equational properties
of the cryptographic functions and the state structure,
and $\ProcTPA$ is a $\TPASymbols$-term denoting a \emph{well-formed} timed process. 
The timed protocol process algebra's syntax $\TBNFSymbols$ is 
parameterized 
 by a sort \sort{Msg} of messages. 
 Moreover, time is represented by a new sort \sort{Real},
 since we 
allow conditional expressions on time using  
linear arithmetic for the reals.

Similar to \cite{YEMM+16}, processes support  four different kinds of choice: (i) a process expression
$P\ ?\ Q$ supports \emph{explicit non-deterministic choice} between
P and Q; (ii) a choice variable $X_{?}$ appearing in a
send message expression +m supports \emph{implicit non-deterministic
choice} of its value, which can furthermore be 
an \emph{unbounded} non-deterministic
choice if $X_{?}$ ranges over an infinite set; (iii) a conditional
\textit{if C then P else Q} supports \emph{explicit deterministic choice}
between P and Q determined by the result of its condition $C$; and
(iv) a receive message expression $-m(X_1,...,X_n)$ supports 
\emph{implicit deterministic} choice about accepting or rejecting
a received message, depending on whether or not it matches
the pattern $m(X_1,...,X_n)$.  This deterministic choice is implicit, 
but it could be made explicit by replacing $-m(X_1,...,X_n) \cdot P$
by the semantically equivalent conditional expression
$-X . \textit{ if } X = m(X_1,...,X_n) \textit{ then } P \textit{ else }\nil \cdot P$, where $X$ is a variable
of sort \sort{Msg}, which therefore accepts any message.

  The timed process algebra has the following syntax, also similar to that of \cite{YEMM+16} plus
  the addition of the suffix $@\Real$ to the sending and receiving actions:

\noindent
\begin{align}
 \ProcConf~ &::= \LProc ~|~ \ProcConf ~\&~ \ProcConf ~|~ \emptyset \notag\\[-.8ex]
 \ProcId~ &::= (\Role, \Nat)\notag\\[-.8ex]
 \LProc ~&::= (\ProcId, \Nat)~ \Proc \notag\\[-.8ex]
 \Proc~ &::=  \nil ~ |~ +(\Msg@\Real)~ | ~ -(\Msg@\Real) ~|~ \Proc \cdot \Proc  ~|  \notag \\[-.8ex]
              &~~~~~~~ \Proc~?~\Proc ~ | ~\textit{if} ~\Cond~ \textit{then} ~\Proc~ \textit{else} ~\Proc    \notag
\end{align} 

\begin{itemize}
\item $\ProcConf$ stands for a \emph{process configuration}, i.e., a set of labeled processes, where the symbol \& 
is used to denote set union for sets of labeled processes. 
\item $\ProcId$ stands for a \emph{process identifier}, where $\Role$ 
refers to the role of the process in the protocol (e.g., prover or verifier) and 
$\Nat$ is a natural number denoting the identity of the process, 
which distinguishes different instances (sessions) of a process specification.
\item $\LProc$ stands for a \emph{labeled process}, i.e., a process $\Proc$ with
 a label $(\ProcId,J)$. For convenience, we sometimes write $(\Role,I,J)$, where
$J$ indicates that the action at stage $J$ of the process $(\Role,I)$
will be the next one to be executed, i.e., the first $J-1$ actions of the process for role $\Role$ have  already been executed. 
Note that the $I$ and $J$ of a process $(\Role,I,J)$ are omitted in a protocol specification.
\item $\Proc$ defines the actions that can be executed within a process, where ${+\Msg@T}$, and ${-\Msg@T}$ respectively denote
 sending out a message or receiving a message $\Msg$.  Note that $T$ must be a variable where the underlying metric space
 determines the exact sending or receiving time, which can be used later in the process.
Moreover, ``$\Proc~\cdot~\Proc$" denotes \emph{sequential composition} of processes,
 where 
symbol \verb!_._! is associative and has the empty process $\nil$ 
as identity. Finally, 
``$\Proc~?~\Proc$" denotes an explicit \emph{nondeterministic choice}, whereas 
``$\textit{if} ~\Cond~ \textit{then} \allowbreak ~\Proc~ \textit{else}
~\Proc$" denotes an explicit \emph{deterministic choice}, whose
continuation depends on the 
satisfaction of the constraint $\Cond$. 
Note that choice is explicitly represented by either 
a non-deterministic choice between $P_1~?~P_2$
or 
by the deterministic evaluation of a conditional expression
$\textit{if} ~\Cond~ \textit{then} \allowbreak ~P_1~ \textit{else} ~P_2$,
but it is also implicitly represented by the instantiation of a variable in different runs.

\end{itemize}
  In all process specifications we assume
four disjoint kinds of variables, similar to the variables of \cite{YEMM+16} plus  time variables:
    \begin{itemize}
        	\item \textbf{\emph{fresh variables}}: each one of these variables receives  
	 a \emph{distinct constant value} from a data type \sort{V_{fresh}},
	 denoting unguessable values such as nonces. 
        	Throughout this paper we will denote this kind of variables
        	as $f,f_1,f_2,\ldots$.

        	\item \textbf{\emph{choice variables}}: variables  first
        	appearing in a \emph{sent message} $\mathit{+M}$, which can be substituted 
        	by any value arbitrarily chosen from a possibly infinite domain.
        A choice variable indicates an  \emph{implicit non-deterministic choice}.
        Given a protocol with choice variables, each possible substitution of 
        these variables denotes a possible run of the protocol.
        	We always denote choice variables by 
        	letters postfixed with the symbol ``?'' as a subscript, e.g., $\chV{A},\chV{B},\ldots$.

        	\item \textbf{\emph{pattern variables}}:  variables first appearing
        	in a \emph{received message} $\mathit{-M}$. These variables will be instantiated 
        	when matching sent and received messages. 
        	\emph{Implicit deterministic choices} are indicated by
                terms containing pattern variables, 
        	since failing to match a pattern term leads to the rejection of a message.
        	A pattern term  plays the implicit role of a guard, 
        	so that, depending on the different ways of matching it, the protocol can have different continuations.
Pattern variables are written with uppercase letters, e.g.,
		    $A,B,N_A,\ldots$.
   
\item \textbf{\emph{time variables}}: a process cannot access the global clock, which implies that  
a time variable $T$ of a reception or sending action $+(M@T)$ can never appear in $M$ but
can appear in the remaining part of the process.
Also, given a receiving action $-(M_1@t_1)$ and a sending action $+(M_2 @ t_2)$ 
in a process of the form $P_1 \cdot -(M_1@t_1) \cdot P_2 \cdot +(M_2@t_2) \cdot P_3$, 
the assumption that timed actions
are performed from left to right forces the constraint $t_1 \leq t_2$.
		    Time variables are always written with a (subscripted) $t$, e.g.,
		    $t_1,t'_1,t_2,t'_2,\ldots$.
		    
        \end{itemize} 

These conditions about variables
are formalized by 
the function
$\mathit{wf}: \mathit{\Proc} \rightarrow \mathit{Bool}$
defined in Figure~\ref{fig:wf@},
for
\emph{well-formed} processes.
The definition of $\mathit{wf}$ uses an auxiliary function 
$\mathit{shVar} : \mathit{\Proc}  \rightarrow \mathit{VarSet}$, which is defined in Figure~\ref{fig:bvar@}.

\begin{figure}[h!]
\small
\begin{align*}
& \wf(P \cdot +(M@T)) = \wf(P) \\[-.5ex]
&   ~~~~~~~~ \textit{if} ~ (\var{M} \cap   \var{P}) \subseteq \bvar{P} \wedge T\notin\var{M}\cup\var{P}\\[-.5ex]
& \wf( P \cdot -(M@T)) = \wf(P)    \\[-.5ex]
& ~~~~~~~~ \textit{if} ~ (\var{M} \cap   \var{P}) \subseteq \bvar{P}  \wedge T\notin\var{M}\cup\var{P}\\[-.5ex]
& \wf( P \cdot 
 (\textit{if} ~~T~ \textit{then} ~Q~ \textit{else} ~R))
  = \wf(P \cdot Q) \wedge \wf(P \cdot R) 
  \\
&~~~~~~~~  \textit{if} \  P \neq \nil \ \textit{and } \ Q \neq \nil \ \textit{and }  \var{T} \subseteq \bvar{P}\\
&\wf(P \cdot (Q~?~R) )= \wf(P \cdot Q)  \wedge \wf(P \cdot R) 
\ \ \textit{ if } Q \neq \nil \ \textit{or} R \neq \nil \\
&\wf(P \cdot ~\nil) = \wf(P) \\[-.5ex]
&\wf(\nil) = \textit{True}.
\end{align*}%
\vspace{-7ex}
\caption{The well-formed function}
\label{fig:wf@}
\vspace{-4ex}
\begin{align*}
& \bvar{+(M @ T) ~\cdot P} =  \var{M} \cup \bvar{P} \\[-.5ex]
& \bvar{-(M @ T)~\cdot P} =   \var{M} \cup \bvar{P}  \\[-.5ex]
& \bvar{ (\textit{if} ~T~ \textit{then} ~P~ \textit{else} ~Q) ~\cdot R} = 
    \var{T} \cup (\bvar{P} \cap \bvar{Q}) \cup \bvar{R}  \\[-.5ex]
& \bvar{ (P~?~Q)~\cdot R } =  ( \bvar{P} \cap \bvar{Q}) \cup \bvar{R} \\[-.5ex]
& \bvar{\nil} = \emptyset
\end{align*}%
\vspace{-6ex}
\caption{The shared variables auxiliary function}
\label{fig:bvar@}
\end{figure}

\begin{example}\label{ex:brands-and-chaum-process-time}
Let us specify the Brands and Chaum protocol of Example~\ref{ex:brands-and-chaum}, where variables are distinct between processes.
A nonce is represented as $n(\chV{A},f)$, whereas a secret value is represented as $s(\chV{A},f)$. 
The identifier of each process is represented by a choice variable $\chV{A}$.
Recall that there is an arbitrary distance $d > 0$. 

\noindent
{\small
 \begin{align*}
(\mathit{Verifier}):\ 
&{-}(\textit{Commit} @ t_1)\ \cdot \\[-.5ex]
&{+}(n(\chV{V},f_1) @ t_2)\ \cdot \\[-.5ex]
&{-} ((n(\chV{V},f_1) \oplus N_P)@t_3)\ \cdot \\[-.5ex]
&\textit{if}\ t_3 \dot{-} t_2 \leq 2* d \\[-.5ex]
& \textit{then}\ {-} (S_P@t_4)\ \cdot \\[-.5ex]
&\hspace{5ex}\textit{if}\ \textit{open}(N_P,S_P,\textit{Commit}) \\[-.5ex]
&\hspace{5ex} \textit{then}\ {-} (\textit{sign}(P,n(\chV{V},f_1) ; N_P \oplus n(\chV{V},f_1))@t_5) \ \textit{else}\ \nil\\
&\textit{else}\ \nil
\\
(\mathit{Prover}):\  
&{+}(\textit{commit}(n(\chV{P},f_1),s(\chV{P},f_2))@t_1)\ \cdot \\[-.5ex]
&{-}(N_V@t_2)\ \cdot \\[-.5ex]
&{+}((N_V \oplus n(\chV{P},f_1))@t_3)\ \cdot \\[-.5ex]
&{+}(s(\chV{P},f_2)@t_4)\ \cdot \\[-.5ex]
&{+}(\textit{sign}(\chV{P},N_V ; n(\chV{P},f_2) \oplus N_V)@t_5) 
\hspace{20ex} 
 \end{align*}%
}
 \end{example}
 
\subsection{Timed Intruder Model}
The active Dolev-Yao intruder model is followed, which implies an intruder can intercept, forward, or create messages
from received messages.
However, intruders are \emph{located}.  Therefore,
they cannot change the physics of the metric space, e.g., cannot send messages from a different location 
or intercept a message that it is not within range. 

In our timed intruder model,
we consider several located intruders,
modeled by the distance function $d: \ProcId \times \ProcId \to \Real$,
each with a family of capabilities (concatenation, deconcatenation, encryption, decryption, etc.),
and each capability may have arbitrarily many instances. 
The combined actions of two intruders
requires time, i.e., their distance;
but a single intruder can perform
many actions in zero time. 
Adding time cost to single-intruder
actions could be done with
additional time constraints, but is
outside the scope of this paper.
Note that, unlike in the standard Dolev-Yao model, we cannot assume just one intruder, since the time required for a principal to communicate with a given intruder is an observable characteristic of that intruder.
 Thus, although the Mafia fraud and distance hijacking attacks considered in the experiments presented in this paper only require 
configurations with just one prover, one verifier and one intruder,
the framework itself allows general participant configurations with multiple intruders. 

 \begin{example}\label{ex:dy}
 In our timed process algebra, the family of capabilities associated to an intruder $k$ are also described as processes.
 For instance, concatenating two received messages  is represented by the process (where time variables $t_1,t_2,t_3$ are not actually used by the process)
 $$
 (\mathit{k.Conc}):\ 
{-}(X @ t_1)\ \cdot 
{-} (Y@t_2)\ \cdot 
{+} (X ; Y@t_3)
$$
 
\noindent
and extracting one of them from a concatenation
is described by the process
 $$
 (\mathit{k.Deconc}):\ 
{-}(X; Y @ t_1)\ \cdot 
{+} (X@t_2)
$$
Roles of intruder capabilities include the identifier of the intruder, and it 
is possible to combine several intruder capabilities from the same or from different intruders.
For example, we may say that the ${+} (X ; Y@T)$ of a process $\mathit{I1.Conc}$ associated to an intruder $I1$
may be synchronized with the ${-}(X; Y @T')$ of a process $\mathit{I2.Deconc}$ associated to an intruder $I2$.
The metric space fixes $T' = T + d(I1,I2)$, where $d(I1,I2) > 0$ if $I1 \neq I2$ and $d(I1,I2)=0$ if $I1=I2$.

A special \emph{forwarding} intruder capability, not considered in the standard Dolev-Yao model, has to be included in order to take 
into account the time travelled by a message from an honest participant to the intruder and later to another 
participant, probably an intruder again.
 $$
 (\mathit{k.Forward}):\ 
{-}(X @ t_1)\ \cdot 
{+} (X@t_2)
$$
 \end{example}
 
\subsection{Timed Process Semantics}\label{sec:semanticsTPA}
\label{sec:process-time-semantics}

A \emph{state} of a protocol $\caP$ consists of a set 
of (possibly partially executed) \emph{labeled processes}, a set of terms in the network $\{Net\}$, and the global clock.
That is, a state is a term of the form 
$\{ LP_1 \,\&\, \cdots \,\&\, LP_n ~|~ \{\textit{Net}\} ~|~ \bar{t}\}$.
In the timed process algebra, the only time information available to a process is the variable
$T$ associated to input and output messages $M @ T$.
However, once these messages have been sent or received, we include them
in the network \textit{Net} with extra information.
When a message $M @ T$ is sent, we store 
$M\ @\ ({A : t} \to \emptyset)$ denoting that message $M$ was sent by process $A$ 
at the global time clock $t$, and propagate $T \mapsto t$ within the process $A$. 
When this message is received by an action $M' @ T'$ of process $B$ (honest participant or intruder) at the global clock time $t'$,
$M$ is matched against $M'$ modulo the cryptographic functions,
$T' \mapsto t'$ is propagated within the process $B$, 
and $B : t'$ is added to the stored message,
following the general pattern
$M\ @\ ({A : t} \to (B_1 : t_1 \cdots B_n : t_n))$.

The rewrite theory $(\TPAStateSymbols, \TPAPEq , \TPARls)$ characterizes the behavior of 
a protocol $\caP$, where  $\TPAStateSymbols$ extends $\TPASymbols$, 
by adding state constructor symbols.
We assume that a protocol run begins with an empty state, i.e., a state with an empty
set of labeled processes, an empty network, and at time zero.
Therefore, the initial empty state is always of the form 
$ \{ \emptyset ~|~ \{\emptyset\}  ~|~ 0.0\}$.
Note that, in a specific run, all the distances are provided a priori according to the metric space and a chosen topology, whereas in a symbolic analysis, they will simply be variables, probably occurring within time constraints.

State changes are defined by a set $\TPARls$ of \emph{rewrite rules}
given below. Each transition rule in $\TPARls$ is labeled with a tuple $\mathit{(ro,i, j, a,n,t)}$, where:

\begin{itemize}
	\item $\mathit{ro}$ is  the role of the labeled process being executed in the transition.
     
    \item $i$ denotes the instance of the same role 
   being executed in the transition. 
	
	\item $j$ denotes the process' step number since its beginning.

	\item $a$ is a ground term identifying the action that is being performed in the transition.
		  It has different possible values: 
		    ``$+m$'' or ``$-m$'' if the message $m$ was sent  (and added to the network) or received, respectively;
		    ``$m$'' if the message $m$ was sent  but did not increase the network,
		    ``$?$'' if the transition performs an explicit non-deterministic choice,  
		    ``$\mathit{T}$'' if the transition performs an explicit deterministic choice,
		    ``$\mathit{Time}$" when the global clock is incremented, or
		    ``$\mathit{New}$" when a new process is added.
	\item $n$ is a number that, if the action that is being executed is an explicit choice, indicates which branch has been chosen as the process continuation.
		  In this case $n$ takes the value of either $1$ or $2$.
	      If the transition does not perform any explicit choice, then $n=0$.
\item $t$ is the global clock at each transition step.
	      
\end{itemize}

Note that in the transition rules $\TPARls$ shown below, 
\textit{Net} denotes the network, represented by a set of messages of the form $M\ @\ ({A : t} \to (B_1 : t_1 \cdots B_n : t_n))$,
$P$ denotes the rest of the process being executed
and
$PS$ denotes the rest of labeled processes of the state (which can be 
the empty set $\emptyset$).

\begin{itemize}
\item 
\emph{Sending a message} is represented by the two transition rules
below, depending on whether the message $M$ is stored, \eqref{eq:tpa-output-modIK}, or just discarded, \eqref{eq:tpa-output-noModIK}.
In \eqref{eq:tpa-output-modIK}, we 
store the sent message with its sending information, $(\textit{ro},i) : \bar{t}$, and add an empty set for those who will be receiving the message in the future
$(M\sigma' @ (\textit{ro},i) : \bar{t} \to \emptyset)$.

\begin{small}
 \begin{align}
&\{ (\textit{ro},i, j)~( +M @ t \cdot P) ~\&~ PS  ~|~ \{Net\}  ~|~ \bar{t} \} \notag\\[-.5ex]
&  \longrightarrow_{(\textit{ro},i,j,+(M\sigma'),0,\bar{t})}\notag\\
& \{ (\textit{ro},i, j+1)~P \sigma' ~\&~ PS ~|~ \{ (M\sigma' @ (\textit{ro},i) : \bar{t} \to \emptyset), Net\}  ~|~ \bar{t} \} \notag\\[-.5ex]
&  \textit{ if } (M\sigma' : (\textit{ro},i) : \bar{t} \to \emptyset) \notin \textit{Net}  \notag\\[-.5ex]
&  \textit{where } \sigma  \ 
 \textit{is a ground substitution}
 \textit{ binding choice variables}
\textit{ in} \  M  
\notag\\[-.5ex]
& \hspace{6.5ex} \textit{and $\sigma'=\sigma\uplus \{t \mapsto \bar{t}\}$}
\eqname{TPA++}
 \label{eq:tpa-output-modIK}
\\[1ex]
  &\{ (\textit{ro},i, j)~( +M @ t \cdot P) ~\&~ PS  ~|~ \{Net\} ~|~ \bar{t} \} \notag\\[-.5ex]
  &  \longrightarrow_{(\textit{ro}, i, j, M\sigma',0,\bar{t})}
 \{ (\textit{ro},i, j+1)~P\sigma' ~\&~ PS ~|~ \{Net\} ~|~ \bar{t}\} \notag\\[-.5ex]
 &  \textit{where } \sigma 
 \ \textit{is a ground substitution} 
\  \textit{binding choice variables} 
\textit{ in} \  M  
\notag\\[-.5ex]
& \hspace{6.5ex} \textit{and $\sigma'=\sigma\uplus \{t \mapsto \bar{t}\}$}
\eqname{TPA+}
 \label{eq:tpa-output-noModIK}
 \end{align}
\end{small}

\item 
\emph{Receiving a message} is represented by the transition rule below.
We add the reception information to the stored message,
i.e., we replace
$(M'@((\textit{ro}',k) : t' \to AS))$
by 
$(M'@((\textit{ro}',k) : t' \to (AS \uplus (\textit{ro},i):\bar{t}))$.

\begin{small}
\begin{align}
&\{	(\textit{ro},i, j)~( -(M@t)\cdot P) ~\&~ PS \mid \{(M'@((\textit{ro}',k) : t' \to AS)), Net\} ~|~ \bar{t}\}\notag\\[-.5ex]
&	\longrightarrow_{(\textit{ro},i,j,-(M\sigma'),0,\bar{t})}\notag\\
& \{	(\textit{ro},i, j+1)~P\sigma' ~\&~ PS \mid \{ (M'@((\textit{ro}',k) : t' \to (AS \uplus (\textit{ro},i):\bar{t})), Net\}   ~|~ \bar{t} \} \notag\\[-.5ex]
&  \textsf{IF} ~ \exists\sigma: M'=_{\PEq} M\sigma, \bar{t}= t'+d((\textit{ro}',k),(\textit{ro},i)),  \eqname{TPA-} \sigma'=\sigma\uplus \{t \mapsto \hat{t}\}
	\label{eq:tpa-input}
\end{align}
\end{small}

\item 
An \emph{explicit deterministic choice} is defined as follows.
More specifically, the rule~\eqref{eq:tpa-detBranch1} describes  the
\textit{then} case, i.e., if the constraint $T$ is satisfied, 
then the process continues as $P$,
whereas 
rule~\eqref{eq:tpa-detBranch2} describes the \textit{else} case, 
that is, if the constraint $T$ is \emph{not} satisfied, the process continues as $Q$.

\begin{small}
\begin{align}
 & \{ (\textit{ro},i, j)~((\textit{if} ~T~ \textit{then} ~P~ \textit{else} ~Q) \cdot R) ~\&~ PS \mid \{Net\} \mid \bar{t}\}\notag\\[-.5ex]
 & \longrightarrow_{(\textit{ro},i, j, T,1,\bar{t})}
\{ (\textit{ro},i, j+1)~(P\cdot R) \,\&\, PS \mid \{Net\} \mid \bar{t}\}
  \textsf{IF} ~T  
 \eqname{TPAif1}
 \label{eq:tpa-detBranch1}
\\
& \{ (\textit{ro},i, j)~( (\textit{if} ~T~ \textit{then} ~P~ \textit{else} ~Q) \cdot R) ~\&~ PS \mid \{Net\} \mid \bar{t}\} \notag\\[-.5ex]
& \longrightarrow_{(\textit{ro},i,j,T,2,\bar{t})}  
\{ (\textit{ro},i, j+1)~(Q\cdot R) \,\&\, PS \mid \{Net\} \mid \bar{t}\}
  \textsf{IF} \neg T 
 \eqname{TPAif2}
  \label{eq:tpa-detBranch2}
\end{align}
\end{small}

\item 
An \emph{explicit non-deterministic choice} is defined as follows.
The process can continue either as $P$, denoted by   rule~\eqref{eq:tpa-nonDetBranch1},
or as $Q$, denoted by  rule~\eqref{eq:tpa-nonDetBranch2}. 

\begin{small}
\begin{align}
& \{ (\textit{ro},i, j)~((P~?~Q)\cdot R) ~\&~ PS \mid \{Net\} \mid \bar{t}\}\notag\\[-.5ex]
&\longrightarrow_{(\textit{ro},i,j,?,1,\bar{t})} 
\{ (\textit{ro},i, j+1)~(P \cdot R) ~\&~ PS \mid \{Net\} \mid \bar{t} \} 
\eqname{TPA?1}
\label{eq:tpa-nonDetBranch1}\\[.5ex]
& \{ (\textit{ro},i, j)~((P~?~Q)\cdot R) ~\&~ PS \mid \{Net\} \mid \bar{t}\}\notag\\[-.5ex]
&\longrightarrow_{(\textit{ro},i,j,?,2,\bar{t})} 
\{ (\textit{ro},i,j+1)(Q \cdot R) ~\&~ PS \mid \{Net\} \mid \bar{t}\} 
\eqname{TPA?2}
\label{eq:tpa-nonDetBranch2}
\end{align}
\end{small}

\item 
\emph{Global Time advancement} is represented by the transition rule below
that increments the global clock enough to make one sent message arrive to its closest destination.

{\small
\begin{align}
&\{	PS \mid \{Net\} ~|~ \bar{t}\} 
	\longrightarrow_{(\bot,\bot,\bot,\textit{Time},0,\bar{t} + t')}
\{	PS \mid \{Net\} ~|~ \bar{t}+ t'\}\notag\\[-.5ex]
&   \textsf{IF} ~ t'=\textit{mte}(PS,Net,\bar{t}) \wedge t' \neq 0 
        \eqname{PTime}
	\label{eq:time}
\end{align}
}
\noindent where the function $\mathit{mte}$ is defined as follows:\\

{\small
\noindent
\begin{itemize}
\item[] $\mathit{mte}(\emptyset,Net,\bar{t}) = \infty$
\item[] $\mathit{mte}(P \& PS,Net,\bar{t}) = \mathit{min}(\mathit{mte}(P,Net,\bar{t}),\mathit{mte}(PS,Net,\bar{t}))$
\item[] $\mathit{mte}((\textit{ro},i,j)\ \nil,Net,\bar{t}) = \infty$
\item[] $\mathit{mte}((\textit{ro},i,j)\ +(M@t) \cdot P,Net,\bar{t}) = 0$
\item[] $\mathit{mte}((\textit{ro},i,j)\ -(M@t) \cdot P,Net,\bar{t}) = \\
~~~~~ ~~~ \mathit{min}\left(\left\{\begin{array}{@{}l@{}l@{}}d((\textit{ro},i),(\textit{ro}',i'))\mid &\ (M'@(\textit{ro}',i') : t_0 \to AS)\in Net\ \\ &\wedge \exists\sigma: M\sigma =_B M' \end{array}\right\}\right)$
\item[] $\mathit{mte}((\textit{ro},i,j)\ (\textit{if} ~T~ \textit{then} ~P~ \textit{else} ~Q) \cdot R,Net,\bar{t}) = 0$
\item[] $\mathit{mte}((\textit{ro},i,j)\ P_1 ? P_2,Net,\bar{t}) = 0$\\
\end{itemize}
}

\noindent
Note that the function $\mathit{mte}$ evaluates to $0$ if some instantaneous action by the previous rules can be performed. Otherwise, $\mathit{mte}$ computes 
the smallest non-zero
time increment required for some already sent message (existing in the network) to be received by some process
(by matching with such an existing message in the network).

\paragraph{\bf Remark}
The timed process semantics assumes a metric space with a distance function $d: \ProcId \times \ProcId \to \Real$ 
such that (i) $d(A,A)=0$, (ii) $d(A,B)=d(B,A)$, and (iii) $d(A,B) \leq d(A,C) + d(C,B)$. 
For every message $M\ @\ ({A : t} \to (B_1 : t_1 \cdots B_n : t_n))$
stored in the network \textit{Net}, 
 our semantics assumes that (iv) $t_i = t + d(A,B_i)$, $\forall 1\leq i\leq n$.
Furthermore, 
according to our wireless communication model, 
our semantics assumes (v) a \emph{time sequence monotonicity} property,
i.e., 
there is no other process $C$ such that $d(A,C) \leq d(A,B_i)$ for some $i$, $1\leq i\leq n$, 
and $C$ is not included in the set of recipients of the message $M$.
Also, for each class of attacks such as the Mafia fraud or the hijacking attack, 
(vi) some extra topology constraints may be necessary.
However, in Section~\ref{sec:simple}, timed processes are transformed into untimed processes with time constraints
and the transformation takes care only of conditions (i), (ii), and (iv).
For a fixed number of participants,
all the instances of the triangle inequality (iii) 
as well as constraints (vi)
should be added by the user.
In the general case, conditions (iii), (v), and (vi) can  be partially specified 
and fully checked on a successful trace; 
see Definition~\ref{def:realizable} in the additional supporting material. \\

\item 
New processes can be added as follows.

\vspace{-1.5ex}
{\small
 \begin{align}
 \left \{
 \begin{array}{@{}l@{}}
  \forall \  (ro)~ P_k \in \ProcPA\notag\\[.5ex]
  \{ PS \mid \{Net\} \mid \bar{t}\} \notag\ \\[-.5ex]
  \longrightarrow_{(\textit{ro}, i + 1, 1,New,0,\bar{t})}\\
 \{ (\textit{ro}, i+1, 1,\chV{x}\sigma,\chV{y}\sigma)~P_k\sigma\rho_{ro,i+1} ~\&~ PS \mid \{Net\}  \mid \bar{t}\}\notag\\ [1ex]
  \textit{where } \rho_\mathit{ro,i+1} 
 \ \textit{is a fresh substitution}, \notag\\[-.5ex]
\ \ \ \sigma  \ 
 \textit{is a ground substitution}
 \textit{ binding } 
\chV{x} \textit{ and }\chV{y}, \notag
\mbox{ and }
i= \MaxProcId(PS, ro)  
 \end{array}
 \right \}\eqname{TPA\&}
 \label{eq:tpa-new}
 \end{align}
}
 \vspace{-1.5ex}

The auxiliary function $\MaxProcId$ counts the instances of a role

\begin{small}
\begin{itemize}
\item[] $\MaxProcId(\emptyset, ro) = 0 $
\item[]
 $\MaxProcId((\textit{ro}', i, j) P \& PS, ro)  =  
 \left\{\begin{array}{@{}l@{}c@{}}
 max(\MaxProcId(PS, ro), i) & \textit{ if } \textit{ro} = \textit{ro}'\\
 \MaxProcId(PS, ro) & \textit{ if } \textit{ro} \neq \textit{ro}'
 \end{array}\right.
 $
\end{itemize}
\end{small}

\noindent
where $PS$ denotes a process configuration, $P$ a process, and $\textit{ro}, \textit{ro}'$  role names.
\end{itemize}

 Therefore, the behavior of a timed protocol in the process algebra is
 defined by the set of transition rules
 $\TPARls = \{ \eqref{eq:tpa-output-modIK},\allowbreak \eqref{eq:tpa-output-noModIK},\allowbreak 
              \eqref{eq:time},\allowbreak
              \eqref{eq:tpa-input},\allowbreak
              \eqref{eq:tpa-detBranch1},\allowbreak  \eqref{eq:tpa-detBranch2},\allowbreak 
              \eqref{eq:tpa-nonDetBranch1},\allowbreak  \eqref{eq:tpa-nonDetBranch2} \} \cup
              \eqref{eq:tpa-new} $.
              
\begin{figure*}[h!]
	\vspace{-.5cm}
	\centering
	\includegraphics[width=.8\linewidth]{./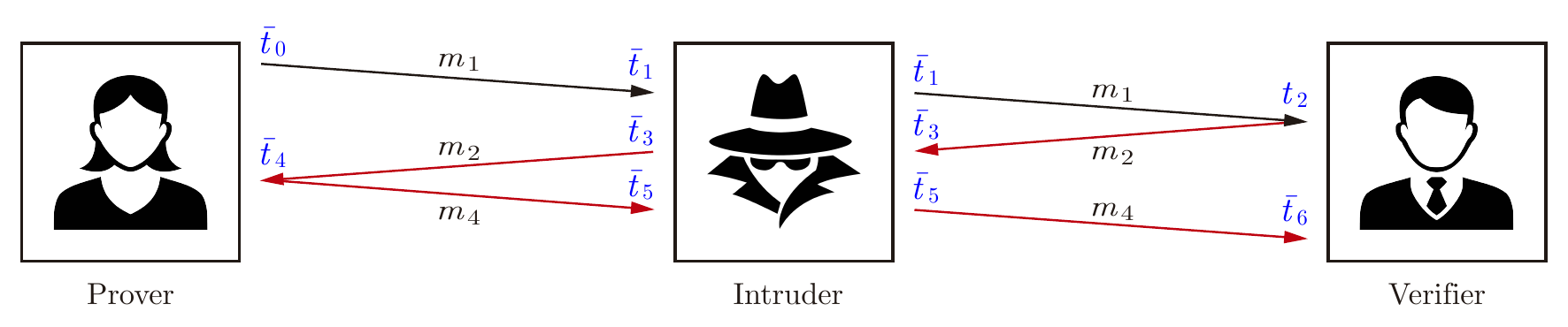}
	\vspace{-.3cm}
{\scriptsize
\begin{align*}
\{ \emptyset{\mid} \{\emptyset\} {\mid} 0.0\} 
\rightarrow_{p,0,1,{\textit New}}
&\{ (p,0,1): {+}(m_1@t_1) \cdots ~|~ \{\emptyset\}  ~|~ \textcolor{blue}{\bar{t}_0}=0.0\}
\ \ 
 m_1 = \textit{commit}(n(p,f_1),s(p,f_2))\\
\rightarrow_{i,0,1,{\textit New}}
&\left\{ 
\begin{array}{@{}l@{\ }}
(p,0,1): {+}(m_1@t_1) \cdots \\
(i.F,0,1): {-}(X@ t'_1) \cdot  {+}(X@ t'_2)
\end{array}
\right| 
\{\emptyset\}  ~|~ \textcolor{blue}{\bar{t}_0}\}\\
\rightarrow_{p,0,1,+(m_1)}
&\left\{ 
\begin{array}{@{}l@{\ }}
(p,0,2): {-}(N_V@t_2)\ \cdots \\
(i.F,0,1): {-}(X@ t'_1) \cdot  {+}(X@ t'_2)
\end{array}
\right| 
\{(m_1@(p,0):\textcolor{blue}{\bar{t}_0} \to \emptyset)\}  ~|~ \textcolor{blue}{\bar{t}_0}\}\\
\rightarrow_{\textit Time}
&\left\{ 
\begin{array}{@{}l@{\ }}
(p,0,2): {-}(N_V@t_2)\ \cdots \\
(i.F,0,1): {-}(X@ t'_1) \cdot  {+}(X@ t'_2)
\end{array}
\right| 
\{(m_1@(p,0):\textcolor{blue}{\bar{t}_0} \to \emptyset)\}  ~|~ \textcolor{blue}{\bar{t}_1}=1.0\}
\\
\rightarrow_{i,0,1,-(m_1)}
&\left\{ 
\begin{array}{@{}l@{\ }}
(p,0,2): {-}(N_V@t_2)\ \cdots \\
(i.F,0,2): {+}(m_1@ t'_2)
\end{array}
\right| 
\{(m_1@(p,0):\textcolor{blue}{\bar{t}_0} \to (i.F,0):\textcolor{blue}{\bar{t}_1})\}  ~|~ \textcolor{blue}{\bar{t}_1}\}\\
\rightarrow_{i,0,3,+(m_1)}
&\left\{ 
\begin{array}{@{}l@{\ }}
(p,0,2): {-}(N_V@t_2)\ \cdots \\
(i.F,0,2): \nil
\end{array}
\right| 
\left\{
\begin{array}{@{}l@{}}
(m_1@(i.F,0):\textcolor{blue}{\bar{t}_1} \to \emptyset
\end{array}
\right\}
  ~|~ \textcolor{blue}{\bar{t}_1} \}\\
\rightarrow_{v,0,1,{\textit New}}
&\left\{ 
\begin{array}{@{}l@{\ }}
(p,0,2): {-}(N_V@t_2)\ \cdots \\
(v,0,1): {-}(\textit{Commit} @ t''_1) \cdots 
\end{array}
\right| 
\left\{
\begin{array}{@{}l@{}}
(m_1@(i.F,0):\textcolor{blue}{\bar{t}_1} \to \emptyset)
\end{array}
\right\}
  ~|~ \textcolor{blue}{\bar{t}_1} \}\\
\rightarrow_{\textit Time}
&\left\{ 
\begin{array}{@{}l@{\ }}
(p,0,2): {-}(N_V@t_2)\ \cdots \\
(v,0,1): {-}(\textit{Commit} @ t''_1) \cdots 
\end{array}
\right| 
\left\{
\begin{array}{@{}l@{}}
(m_1@(i.F,0):\textcolor{blue}{\bar{t}_1} \to \emptyset)
\end{array}
\right\}
  ~|~ \textcolor{blue}{\bar{t}_2}=2.0 \}\\
\rightarrow_{v,0,1,-(m_1)}
&\left\{ 
\begin{array}{@{}l@{\ }}
(p,0,2): {-}(N_V@t_2)\ \cdots \\
(v,0,2): {+}(m_2 @ t''_2) \cdots 
\end{array}
\right| 
\left\{
\begin{array}{@{}l@{}}
(m_1@(i.F,0):\textcolor{blue}{\bar{t}_1} \to (v,0):\textcolor{blue}{\bar{t}_2})
\end{array}
\right\}
  ~|~ \textcolor{blue}{\bar{t}_2} \}
\ \ 
   m_2=n(v,f_3)\\
\rightarrow_{v,0,2,+(m_2)}
&\left\{ 
\begin{array}{@{}l@{\ }}
(p,0,2): {-}(N_V@t_2)\ \cdots \\
(v,0,3): {-} (m_3@t''_3) \cdots 
\end{array}
\right| 
\left\{
\begin{array}{@{}l@{}}
(m_2@(v,0):\textcolor{blue}{\bar{t}_2} \to \emptyset)
\end{array}
\right\}
  ~|~ \textcolor{blue}{\bar{t}_2} \}
\ \ 
   m_3=(m_2 \oplus N_P)\\
\rightarrow_{i,1,1,{\textit New}}
&\left\{ 
\begin{array}{@{}l@{\ }}
(p,0,2): {-}(N_V@t_2)\ \cdots \\
(v,0,3): {-} (m_3@t''_3) \cdots \\
(i.F,1,1): {-}(X'@ t'''_1) \cdot  {+}(X'@ t'''_2)
\end{array}
\right| 
\left\{
\begin{array}{@{}l@{}}
(m_2@(v,0):\textcolor{blue}{\bar{t}_2} \to \emptyset)
\end{array}
\right\}
  ~|~ \textcolor{blue}{\bar{t}_2} \}\\
\rightarrow_{\textit Time}
&\left\{ 
\begin{array}{@{}l@{\ }}
(p,0,2): {-}(N_V@t_2)\ \cdots \\
(v,0,3): {-} (m_3@t''_3) \cdots \\
(i.F,1,1): {-}(X'@ t'''_1) \cdot  {+}(X'@ t'''_2)
\end{array}
\right| 
\left\{
\begin{array}{@{}l@{}}
(m_2@(v,0):\textcolor{blue}{\bar{t}_2} \to \emptyset)
\end{array}
\right\}
  ~|~ \textcolor{blue}{\bar{t}_3}=3.0 \}\\
\rightarrow_{i,1,1,-(m_2)}
&\left\{ 
\begin{array}{@{}l@{\ }}
(p,0,2): {-}(N_V@t_2)\ \cdots \\
(v,0,3): {-} (m_3@t''_3) \cdots \\
(i.F,1,2): {+}(m_2@ t'''_2)
\end{array}
\right| 
\left\{
\begin{array}{@{}l@{}}
(m_2@(v,0):\textcolor{blue}{\bar{t}_2} \to (i.F,1):\textcolor{blue}{\bar{t}_3})
\end{array}
\right\}
  ~|~ \textcolor{blue}{\bar{t}_3} \}\\
\rightarrow_{i,1,2,+(m_2)}
&\left\{ 
\begin{array}{@{}l@{\ }}
(p,0,2): {-}(N_V@t_2)\ \cdots \\
(v,0,3): {-} (m_3@t''_3) \cdots \\
(i.F,1,2): \nil
\end{array}
\right| 
\left\{
\begin{array}{@{}l@{}}
(m_2@(i.F,1):\textcolor{blue}{\bar{t}_3} \to \emptyset)
\end{array}
\right\}
  ~|~ \textcolor{blue}{\bar{t}_3} \}\\
\rightarrow_{Time}
&\left\{ 
\begin{array}{@{}l@{\ }}
(p,0,2): {-}(N_V@t_2)\ \cdots \\
(v,0,3): {-} (m_3@t''_3) \cdots 
\end{array}
\right| 
\left\{
\begin{array}{@{}l@{}}
(m_2@(i.F,1):\textcolor{blue}{\bar{t}_3} \to \emptyset)
\end{array}
\right\}
  ~|~ \textcolor{blue}{\bar{t}_4}=4.0 \}\\
\rightarrow_{p,0,2,-(m_2)}
&\left\{ 
\begin{array}{@{}l@{\ }}
(p,0,3): {+}(m_4@t_3)\ \cdots \\
(v,0,3): {-} (m_3@t''_3) \cdots 
\end{array}
\right| 
\left\{
\begin{array}{@{}l@{}}
(m_2@(i.F,1):\textcolor{blue}{\bar{t}_3} \to (p,0):\textcolor{blue}{\bar{t}_4})
\end{array}
\right\}
  ~|~ \textcolor{blue}{\bar{t}_4} \}
\ \ 
 m_4=(m_2 \oplus n(p,f_1))\\
\rightarrow_{p,0,3,+(m_4)}
&\left\{ 
\begin{array}{@{}l@{\ }}
(p,0,4): {+}(m_5@t_4)\ \cdots \\
(v,0,3): {-} (m_3@t''_3) \cdots 
\end{array}
\right| 
\left\{
\begin{array}{@{}l@{}}
(m_4@(p,0):\textcolor{blue}{\bar{t}_4} \to \emptyset)
\end{array}
\right\}
  ~|~ \textcolor{blue}{\bar{t}_4} \}
\ \ 
 m_5=s(p,f_2)\\
 \rightarrow_{i,2,1,{\textit New}}
&\left\{ 
\begin{array}{@{}l@{\ }}
(v,0,3): {-} (m_3@t''_3) \cdots \\
(i.F,2,1): {-}(X''@ t''''_1) \cdot  {+}(X''@ t''''_2)
\end{array}
\right| 
\left\{
\begin{array}{@{}l@{}}
(m_4@(p,0):\textcolor{blue}{\bar{t}_4} \to \emptyset)
\end{array}
\right\}
  ~|~ \textcolor{blue}{\bar{t}_4} \}\\
 \rightarrow_{\textit Time}
&\left\{ 
\begin{array}{@{}l@{\ }}
(v,0,3): {-} (m_3@t''_3) \cdots \\
(i.F,2,1): {-}(X''@ t''''_1) \cdot  {+}(X''@ t''''_2)
\end{array}
\right| 
\left\{
\begin{array}{@{}l@{}}
(m_4@(p,0):\textcolor{blue}{\bar{t}_4} \to \emptyset)
\end{array}
\right\}
  ~|~ \textcolor{blue}{\bar{t}_5}=5.0 \}\\
\rightarrow_{i,3,1,-(m_4)}
&\left\{ 
\begin{array}{@{}l@{\ }}
(v,0,3): {-} (m_3@t''_3) \cdots \\
(i.F,2,1): {+}(m_4@ t''''_2)
\end{array}
\right| 
\left\{
\begin{array}{@{}l@{}}
(m_4@(p,0):\textcolor{blue}{\bar{t}_4} \to (i.F,2):\textcolor{blue}{\bar{t}_5})
\end{array}
\right\}
  ~|~ \textcolor{blue}{\bar{t}_5} \}\\
\rightarrow_{i,3,2,+(m_4)}
&\left\{ 
\begin{array}{@{}l@{\ }}
(v,0,3): {-} (m_3@t''_3) \cdots \\
(i.F,2,1): \ nil
\end{array}
\right| 
\left\{
\begin{array}{@{}l@{}}
(m_4@(i.F,2):\textcolor{blue}{\bar{t}_5} \to \emptyset)
\end{array}
\right\}
  ~|~ \textcolor{blue}{\bar{t}_5} \}\\
\rightarrow_{\textit Time}
&\left\{ 
\begin{array}{@{}l@{\ }}
(v,0,3): {-} (m_3@t''_3) \cdots 
\end{array}
\right| 
\left\{
\begin{array}{@{}l@{}}
(m_4@(i.F,2):\textcolor{blue}{\bar{t}_5} \to \emptyset)
\end{array}
\right\}
  ~|~ \textcolor{blue}{\bar{t}_6}=6.0 \}\\
\rightarrow_{v,0,3,-(m_4)}
&\left\{ 
\begin{array}{@{}l@{\ }}
(v,0,4): {-} (S_P@t_4) \cdots 
\end{array}
\right| 
\left\{
\begin{array}{@{}l@{}}
(m_4@(i.F,2):\textcolor{blue}{\bar{t}_5} \to (v,0):\textcolor{blue}{\bar{t}_6})
\end{array}
\right\}
  ~|~ \textcolor{blue}{\bar{t}_6}\}
\end{align*}
}
\vspace{-6ex}
\caption{Brand and Chaum execution for a prover, an intruder, and a verifier}
\label{fig:brands-and-chaum-simulation}
\vspace{-5ex}
 \end{figure*}

\begin{example}\label{ex:brands-and-chaum-simulation}
Continuing Example~\ref{ex:brands-and-chaum-process-time},
a possible run of the protocol is represented in Figure~\ref{fig:brands-and-chaum-simulation}
for a prover $p$, an intruder $i$, and a verifier $v$.
A simpler, graphical representation of the same run is included at the top of the figure. 
There, the neighborhood distance is $d= 1.0$, the distance between the prover and the verifier is $d(p,v)=2.0$, but the distance between the prover and the intruder as well as the distance between the verifier
and the intruder are $d(v,i)=d(p,i)=1.0$,
i.e., 
the honest prover $p$ is outside $v$'s neighborhood, $d(v,p) > d$, where $d(v,p) = d(v,i) + d(p,i)$.
Only the first part of the rapid message exchange sequence is represented
and the forwarding action of the intruder is denoted by $i.F$.

The prover sends the commitment $m_1 = \textit{commit}(n(p, f_1), s(p, f_2))$ at instant $\bar{t}_0=0.0$
and is received by the intruder at instant $\bar{t}_1=1.0$.
The intruder forwards $m_1$ at instant $\bar{t}_1$ and is received by the verifier at instant $\bar{t}_2=2.0$.
Then, the verifier sends $m_2 =n(v,f_3)$ at instant $\bar{t}_2$, which is received by the intruder at instant $\bar{t}_3=3.0$.
The intruder forwards $m_2$ at instant $\bar{t}_3$, which is received by the prover at instant $\bar{t}_4=4.0$.
Then, the prover sends $m_4 =(m_2 \oplus n(p,f_1))$ at instant $\bar{t}_4$ and is received by the intruder at instant $\bar{t}_5=5.0$.
Finally, the intruder forwards $m_4$ at instant $\bar{t}_5$ and is received by the verifier at instant $\bar{t}_6=6.0$.
Thus, the verifier sent $m_2$ at time $\bar{t}_2=2.0$ and received $m_4$ at time $\bar{t}_6=6.0$.
But 
the protocol cannot complete the run, since 
$\bar{t}_6-\bar{t}_2=4.0 < 2*d=2.0$ is unsatisfiable.
\end{example}
 
Our time protocol semantics can already be implemented straightforwardly
as a simulation tool.
For instance, 
\cite{MPPC+07} describes distance bounding protocols using an authentication logic, which describes the evolution of the protocol, \cite{NTU16} provides a strand-based framework for distance bounding protocols based on simulation with time constraints, and \cite{DD19}
defines distance bounding protocol using some applied-pi calculus.
Note, however, that, since the number
of metric space configurations is infinite, model checking a protocol for a \emph{concrete configuration} with a simulation tool is very limited, since it cannot prove the \emph{absence} of
an attack for \emph{all} configurations.
For this reason, we follow a symbolic
approach that can explore \emph{all}
relevant configurations.

In the following section, we provide a sound and complete protocol transformation from our timed process algebra
to  the untimed process algebra of the Maude-NPA tool. 
In order to do this, we make use of an approach introduced by Nigam et al.~\cite{NTU16} in which properties of time, which can include both those following from physics and those checked by principals, are represented by linear constraints on the reals.  As a path is built, an SMT solver can be used to check that the constraints are satisfiable, as is done in 
 \cite{NTU18}.

\section{Timed Process Algebra into Untimed Process Algebra with Time Variables and Timing Constraints}\label{sec:simple}

In this section, we consider a more general constraint satisfiability approach,
where all possible (not only some) runs are symbolically analyzed. 
This provides both a trace-based insecure statement, i.e., 
a run leading to an insecure secrecy or authentication property is discovered
given enough resources,
and an unsatisfiability-based secure statement, i.e.,
there is no run leading to an insecure secrecy or authentication property
due to time constraint unsatisfiability.

\begin{example}\label{ex:brands-and-chaum-simulation-untimed}
Consider again the run of the Brands-Chaum  protocol given in Figure~\ref{fig:brands-and-chaum-simulation}.
All the terms of sort \sort{Real}, written in blue color, are indeed variables that get an assignment during the run
based on the distance function.
Then, it is possible to obtain a symbolic trace from the run of Figure~\ref{fig:brands-and-chaum-simulation}, where the following time constraints are accumulated:
\begin{itemize}
\item[]\quad\quad $\bar{t}_1 = \bar{t}_0 + d((p,0),(i.F,0))$, $d((p,0),(i.F,0)) \geq 0$
\item[]\quad\quad $\bar{t}_2 = \bar{t}_1 + d((v,0),(i.F,0))$, $d((v,0),(i.F,0)) \geq 0$
\item[]\quad\quad $\bar{t}_3 = \bar{t}_2 + d((v,0),(i.F,1))$, $d((v,0),(i.F,1)) \geq 0$
\item[]\quad\quad $\bar{t}_4 = \bar{t}_3 + d((p,0),(i.F,1))$, $d((p,0),(i.F,1)) \geq 0$
\item[]\quad\quad $\bar{t}_5 = \bar{t}_4 + d((p,0),(i.F,2))$, $d((p,0),(i.F,2)) \geq 0$ 
\item[]\quad\quad $\bar{t}_6 = \bar{t}_5 + d((v,0),(i.F,2))$, $d((v,0),(i.F,2)) \geq 0$
\end{itemize}

Note that these constraints are unsatisfiable when 
combined with 
(i) the assumption $d > 0$,
(ii) the verifier check $\bar{t}_6-\bar{t}_2 \leq 2*d$,
(iii) the assumption that the honest prover is outside the verifier's neighborhood,
$d((p,0),\allowbreak(v,0)) > d$, 
(iv) the triangular inequality from the metric space,
$d((p,0),(v,0)) \leq d((p,0),\allowbreak(i.F,0)) + d(\allowbreak(i.F,0),(v,0))$,
and
(v) the assumption that there is only one intruder
$d((i.F,0),\allowbreak(i.F,1))=0$ and $d((i.F,0),\allowbreak(i.F,2))=0$.
\end{example}

As explained previously in the remark, 
there are some implicit conditions based on the \textit{mte} function to calculate the time increment to the closest destination of a message.
However, the \textit{mte} function disappears in the untimed process algebra and those implicit conditions are incorporated into the symbolic run.
In the following, we define a transformation of the timed process algebra 
by (i) removing the global clock; 
(ii) adding the time data into untimed messages of a process algebra without time 
(as done in \cite{NTU16});
and 
(iii) adding 
linear arithmetic conditions over the reals for the time constraints
(as is done in \cite{NTU18}).
The soundness and completeness proof of the transformation
is included in the additional supporting material at the end of the paper.

Since all the relevant time information is actually stored in 
messages of the form
$M\ @\ ({A : t} \to (B_1 : t_1 \cdots B_n : t_n))$
and controlled by the transition rules 
\eqref{eq:tpa-output-modIK},\allowbreak \eqref{eq:tpa-output-noModIK},\allowbreak 
\ and 
              \eqref{eq:tpa-input},
the mapping $\tpatopa$ of Definition~\ref{def:transf} below transforms
each message $M@t$ of a timed process
into 
a message $M\ @\ ({A : \chV{t}} \to \chV{AS})$ of an untimed process.
That is, we use 
a timed choice variable $\chV{t}$ for the 
sending time
and
a variable $\chV{AS}$ for the reception information $(B_1 : t'_1 \cdots B_n : t'_n)$ associated to the sent message.
Since choice variables are replaced by specific values, 
both $\chV{t}$ and $\chV{AS}$ 
will be replaced by the appropriate values that make the execution and all its time constraints possible.
Note that these two choice variables will be replaced by logical variables during the symbolic execution.

\begin{definition}[Adding Time Variables and Time Constraints to Untimed Processes]\label{def:transf}
The mapping $\tpatopa$ from timed processes into untimed processes and its auxiliary mapping $\tpatopax$ are defined as follows:

{\small\noindent
\begin{align*}
& \tpatopa(\emptyset) =  \mathit{\emptyset}\\[-.5ex]
& \tpatopa((\mathit{ro}{,}i{,} j)\, P\ \& \ \mathit{PS}) = (\textit{ro}{,}i{,} j)\,\tpatopax(P{,}\textit{ro}{,}i) \ \& \ \tpatopa(\mathit{PS})\\[1ex]
&  \tpatopax( \nil,\textit{ro},i) =  \nil \\[-.5ex]
&  \tpatopax(\ +(M@t) \ . \  P, \textit{ro},i) =  +(M@((\textit{ro},i) : \chV{t} \to \chV{AS})) \ .\  \tpatopax(P\gamma, \textit{ro},i) \\[-.5ex]
&\hspace*{5mm} 
 \mbox{where }\gamma=\{t\mapsto \chV{t}\}\\[-.5ex]
&  \tpatopax(\  -(M@t) \ . \  P, \textit{ro},i) =  \\[-.5ex]
&\hspace*{2mm}  
-(M@((\textit{ro}',i') : t' \to ((\textit{ro},i):t)\uplus AS)) \ .\ \\[-.5ex]
&\hspace*{2mm} 
\textit{ if } t = t'+d((\textit{ro},i),(\textit{ro}',i'))\wedge d((\textit{ro},i),(\textit{ro}',i'))\geq 0  \textit{ then }  \tpatopax(P, \textit{ro},i) \textit{ else } \nil    \\[-.5ex]
&  \tpatopax(\textit{ (if } C \textit{ then }  P \textit{ else } Q ) \  . \ R{,} \textit{ro}{,}i{,}x{,}y) \\[-.5ex]
& \hspace*{2mm} = \textit{ (if } C  \textit{ then }  \tpatopax(P{,}\textit{ro}{,}i{,}x{,}y) \textit{ else } \tpatopax(Q{,}\textit{ro}{,}i{,}x{,}y)) \  .\ \tpatopax(R{,} \textit{ro}{,}i{,}x{,}y) \\[-.5ex]
&  \tpatopax( \ (P \ ? \ Q)\ . \ R{,} \textit{ro}{,}i{,}x{,}y) \\[-.5ex]
&  \hspace*{2mm}= \ (\tpatopax(P{,}\textit{ro}{,}i{,}x{,}y) \ ? \ \tpatopax(Q{,}\textit{ro}{,}i{,}x{,}y))\ . \ \tpatopax(R{,} \textit{ro}{,}i{,}x{,}y)
\end{align*}%
}%
\vspace{-2ex}

\noindent
where 
$\chV{t}$ and $\chV{AS}$ are  choice variables different for each one of the sending actions,
$\textit{ro}',i',t',d,AS$ are pattern variables different for each one of the receiving actions,
$P$, $Q$, and $R$ are processes,
$M$ is a message, 
and
$C$ is a constraint.
 \end{definition}

\begin{example}\label{ex:brands-and-chaum-process-time-trans}
The timed processes 
of Example~\ref{ex:brands-and-chaum-process-time}
are transformed into the following untimed processes.
We remove the ``else $\nil$" branches for clarity.

\noindent
{\small
 \begin{align*}
(\mathit{Verifier}):\ 
&{-}(\textit{Commit}\ @\ A_1 : t'_1 \to \chV{V} : t_1\uplus AS_1)\ \cdot \\[-.5ex]
&\textit{if}\ t_1 = t'_1 + d(A_1,\chV{V}) \wedge d(A_1,\chV{V})\geq 0 \textit{ then} \\[-.5ex]
&{+}(n(\chV{V},f_1)\ @\ \chV{V} : \chV{t_2} \to \chV{AS_2})\ \cdot \\[-.5ex]
&{-} ((n(\chV{V},f_1) \oplus N_P)\ @\ A_3 : t'_3 \to \chV{V} : t_3\uplus AS_3)\ \cdot \\[-.5ex]
&\textit{if}\ t_3 = t'_3 + d(A_3,\chV{V}) \wedge d(A_3,\chV{V})\geq 0 \textit{ then} \\[-.5ex]
&\textit{if}\ t_3 \dot{-} \chV{t_2} \leq 2*d \textit{ then} \\[-.5ex]
&{-} (S_P\ @\ A_4: t'_4 \to \chV{V} : t_4\uplus AS_4)\ \cdot \\[-.5ex]
&\textit{if}\ t_4 = t'_4 + d(A_4,\chV{V}) \wedge d(A_4,\chV{V})\geq 0 \textit{ then} \\[-.5ex]
&\textit{if}\ \textit{open}(N_P,S_P,\textit{Commit}) \textit{ then} \\[-.5ex]
& {-} (\textit{sign}(P,n(\chV{V},f_1) ; N_P \oplus n(\chV{V},f_1))\ @\ A_5: t'_5 \to \chV{V} : t_5\uplus AS_5) \\[-.5ex]
&\textit{if}\ t_5 = t'_5 + d(A_5,\chV{V}) \wedge d(A_5,\chV{V})\geq 0
\\
(\mathit{Prover}):\  
&{+}(\textit{commit}(n(\chV{P},f_1),s(\chV{P},f_2))@\chV{P} : \chV{t_1} \to \chV{AS_1})\ \cdot \\[-.5ex]
&{-}(V;N_V@ \ A_2: t'_2 \to \chV{V} : t_2\uplus AS_2)\ \cdot \\[-.5ex]
&\textit{if}\ t_2 = t'_2 + d(A_2,\chV{P}) \wedge d(A_2,\chV{P})\geq 0 \textit{ then} \\[-.5ex]
&{+}((N_V \oplus n(\chV{P},f_1))@\chV{P} : \chV{t_3} \to \chV{AS_3})\ \cdot \\[-.5ex]
&{+}(s(\chV{P},f_2)@\chV{P} : \chV{t_4} \to \chV{AS_4})\ \cdot \\[-.5ex]
&{+}(\textit{sign}(\chV{P},N_V ; n(\chV{P},f_2) \oplus N_V)@\chV{P} : \chV{t_5} \to \chV{AS_5})) 
\end{align*}%
}%
 \end{example}

 \begin{example}\label{ex:dy-untimed}
The timed processes 
of Example~\ref{ex:dy} for the intruder 
are transformed into the following untimed processes.
Note that we use the intruder identifier $I$ 
associated to each role
instead of a choice variable $\chV{I}$.

\noindent
{\small
 \begin{align*}
 (\mathit{I.Conc}):\ 
&{-}(X @\ A_1 : t_1 \to I : t'_1\uplus AS_1)\ \cdot\\[-.5ex] 
&\textit{if}\ t'_1 = t_1 + d(A_1,I) \wedge d(A_1,I)\geq 0 \textit{ then} \\[-.5ex]
&{-}(Y @\ A_2 : t_2 \to I : t'_2\uplus AS_2)\ \cdot\\[-.5ex] 
&\textit{if}\ t'_2 = t_2 + d(A_2,I) \wedge d(A_2,I)\geq 0 \textit{ then} \\[-.5ex]
&{+} (X ; Y@I : \chV{t_3} \to \chV{AS})
\end{align*}
}
 
\noindent
{\small
 \begin{align*}
 (\mathit{I.Deconc}):\ 
&{-}(X;Y @\ A_1 : t_1 \to I : t'_1\uplus AS_1)\ \cdot\\[-.5ex] 
&\textit{if}\ t'_1 = t_1 + d(A_1,I) \wedge d(A_1,I)\geq 0 \textit{ then} \\[-.5ex]
&{+} (X@I : \chV{t_2} \to \chV{AS})
\\
 (\mathit{I.Forward}):\ 
&{-}(X @\ A_1 : t_1 \to I : t'_1\uplus AS_1)\ \cdot\\[-.5ex] 
&\textit{if}\ t'_1 = t_1 + d(A_1,I) \wedge d(A_1,I)\geq 0 \textit{ then} \\[-.5ex]
&{+} (X@I : \chV{t_2} \to \chV{AS})
\end{align*}
}

 \end{example}

Once a timed process is transformed into an untimed process with time variables and time constraints using the notation of Maude-NPA,
we rely on 
both
a soundness and completeness proof from the Maude-NPA process notation into Maude-NPA forward rewriting semantics
and on 
a soundness and completeness proof from Maude-NPA forward rewriting semantics into Maude-NPA backwards symbolic semantics,
see \cite{YEMM+16,YEMM19}.
Since the Maude-NPA backwards symbolic semantics already considers constraints in a very general sense \cite{EMMS15}, 
we only need to perform the 
additional satisfiability check for
linear arithmetic  over the reals.

\section{Timed Process Algebra into Strands in Maude-NPA }\label{sec:byhand}

This section is provided to help in understanding the experimental output.  Although Maude-NPA accepts protocol specifications in either the process algebra language or the strand space language, it still gives outputs only in the strand space notation. 
Thus, in order to make our experimental output easier to understand, we describe the translation from timed process
into strands with time variables and time constraints. 
This translation is also sound and complete,
as it imitates the transformation of Section~\ref{sec:simple}
and the transformation of 
\cite{YEMM+16,YEMM19}.

Strands~\cite{THG99} are used in Maude-NPA to represent both the actions of honest principals (with a strand 
specified for each protocol role) 
and those of an intruder 
(with a strand 
for each action an intruder is able to perform on messages).
In Maude-NPA, strands evolve over time.
The symbol $|$ is used to divide past and future.
That is, given a strand 
$[\ \textit{msg}_1^{\pm},\ \ldots,\ \textit{msg}_i^{\pm}\ |\allowbreak\ \textit{msg}_{i+1}^{\pm},\ \ldots,\ \textit{msg}_k^{\pm}\ ]$,
 messages $\textit{msg}_1^\pm,\linebreak[2] \ldots, \textit{msg}_{i}^\pm$ are
the \emph{past messages}, and messages $\textit{msg}_{i+1}^\pm, \ldots,\textit{msg}_k^\pm$
are the \emph{future messages} ($\textit{msg}_{i+1}^\pm$ is the immediate future
message).
Constraints can be also inserted into strands. 
A strand 
$[\textit{msg}_1^\pm, \linebreak[2]\ldots, \linebreak[2] \textit{msg}_k^\pm]$ 
is shorthand for 
$[nil ~|~ \textit{msg}_1^\pm,\linebreak[2] \ldots,\linebreak[2] \textit{msg}_k^\pm , nil ]$.
An \emph{initial state} is a state where the bar is at the beginning for all strands in the state,
and the network has no possible intruder fact of the form $\inI{\textit{m}}$. 
A \emph{final state} is a state where the bar is at the end for all strands in the state
and there is no negative intruder fact of the form $\nI{\textit{m}}$.

In the following example, we illustrate how the timed process algebra can be transformed into strands specifications of Maude-NPA.

\begin{example}\label{ex:brands-and-chaum-strand-time-trans}
The timed processes of Example~\ref{ex:brands-and-chaum-process-time}
are transformed into the following strand specification. 

\noindent
{\small
 \begin{align*}
(\mathit{Verifier}):\ [
&{-}(\textit{Commit}\ @\ A_1 : t'_1 \to V : t_1\uplus AS_1), \\[-.5ex]
& (t_1 = t'_1 + d(A_1,V) \wedge d(A_1,V)\geq 0), \\[-.5ex]
&{+}(n(V,f_1)\ @\ V : t_2 \to AS_2), \\[-.5ex]
&{-} ((n(V,f_1) \oplus N_P)\ @\ A_3 : t'_3 \to V : t_3\uplus AS_3), \\[-.5ex]
& (t_3 = t'_3 + d(A_3,V) \wedge d(A_3,V)\geq 0), \\[-.5ex]
& (t_3 \dot{-} t_2 \leq 2*d), \\[-.5ex]
&{-} (S_P\ @\ A_4: t'_4 \to V : t_4\uplus AS_4), \\[-.5ex]
& (t_4 = t'_4 + d(A_4,V) \wedge d(A_4,V)\geq 0), \\[-.5ex]
& \textit{open}(N_P,S_P,\textit{Commit}), \\[-.5ex]
& {-} (\textit{sign}(P,n(V,f_1) ; N_P \oplus n(V,f_1))@\ A_5: t'_5 \to V : t_5\uplus AS_5), \\[-.5ex] 
& (t_5 = t'_5 + d(A_5,V) \wedge d(A_5,V)\geq 0)
]
\\
(\mathit{Prover}):\  [
&{+}(\textit{commit}(n(P,f_1),s(P,f_2))@P : t_1 \to AS_1),\\[-.5ex]
&{-}(N_V@ \ A_2: t'_2 \to V : t_2\uplus AS_2), \\[-.5ex]
& (t_2 = t'_2 + d(A_2,P) \wedge d(A_2,P)\geq 0), \\[-.5ex]
&{+}((N_V \oplus n(P,f_1))@P : t_3 \to AS_3), \\[-.5ex]
&{+}(s(P,f_2)@P : t_4 \to AS_4), \\[-.5ex]
&{+}(\textit{sign}(P,N_V ; n(P,f_2) \oplus N_V)@P : t_5 \to AS_5)
]
 \end{align*}
}
 \end{example}

We specify the desired security properties in terms of \emph{attack patterns} including logical variables, which describe the insecure states that Maude-NPA is trying to prove unreachable.
Specifically, the tool attempts to find a \emph{backwards narrowing sequence} path from the attack pattern to an initial state 
until it can no longer form any backwards narrowing steps, at which point it terminates.
If   it has not found an initial state, the attack pattern is judged \emph{unreachable}. 

The following example shows how a classic mafia fraud attack for the Brands-Chaum protocol can be encoded in Maude-NPA's strand notation.

\begin{example}\label{ex:brands-and-chaum-strands-time-execution-mafia}
Following the strand specification of the Brands-Chaum protocol
given in Example~\ref{ex:brands-and-chaum-strand-time-trans},
the mafia attack of Example~\ref{ex:brands-and-chaum} is given as the following attack pattern.
Note that Maude-NPA uses symbol \texttt{===} for equality on the reals, \texttt{+=+} for addition on the reals, 
\texttt{*=*} for multiplication on the reals, 
and
\texttt{-=-} for subtraction on the reals. Also, we consider one prover \texttt{p}, one verifier \texttt{v}, and one intruder \texttt{i} at fixed locations.
Extra time constraints are included in an \texttt{smt} section,
where a triangular inequality has been added.
The mafia fraud attack is secure for Brands-Chaum and no initial state is found in the backwards search.

{
\scriptsize
\begin{verbatim}
       eq ATTACK-STATE(1) ---  Mafia fraud
          = :: r :: --- Verifier
            [ nil, -(commit(n(p,r1),s(p,r2))  @ i : t1 -> v : t2),
                    ((t2 === t1 +=+ d(i,v)) and d(i,v) >= 0/1),
                    +(n(v,r)                  @ v : t2 -> i : t2''),
                    -(n(v,r) * n(p,r1)        @ i : t3 -> v : t4),
                    (t3 >= t2 and (t4 === t3 +=+ d(i,v)) and d(i,v) >= 0/1),
                    ((t4 -=- t2) <= (2/1 *=* d)) | nil ] &
            :: r1,r2 :: --- Prover
            [ nil,  +(commit(n(p,r1),s(p,r2)) @ p : t1' -> i : t1''),
                    -(n(v,r) @ i : t2'' -> p : t3'),
                    ((t3' === t2'' +=+ d(i,p)) and d(i,p) >= 0/1),
                    +(n(v,r) * n(p,r1)        @ p : t3' -> i : t3'') | nil ]
            || smt(d(v,p) > 0/1 and d(i,p) > 0/1 and d(i,v) > 0/1 and d(v,i) <= d and 
                    (d(v,i) +=+ d(p,i)) >= d(v,p) and d(v,p) > d)
            || nil || nil || nil [nonexec] .
\end{verbatim}
}
\end{example}

\section{Experiments}\label{sec:experiments}
As a feasibility study, we have encoded several distance bounding protocols in Maude-NPA.
It was necessary to slightly alter the Maude-NPA tool by
(i) including minor modifications to the state space reduction techniques to allow for timed messages; 
(ii) the introduction of the sort \textsf{Real} and its associated operations;
and
(iii) the connection of Maude-NPA to a \emph{Satisfiability Modulo Theories (SMT)} solver\footnote{Several SMT solvers are publicly available, but the programming language Maude~\cite{maude-manual} currently  supports CVC4 \cite{cvc4} 
and
Yices \cite{yices}.} 
(see \cite{NOT06} for details on SMT).
The specifications, outputs, and the modified version of Maude-NPA
are available
at   {\small\url{http://personales.upv.es/sanesro/indocrypt2020/}}. 

Although the timed model allows an unbounded number of principals, the attack patterns used to specify insecure goal states allow us to limit the number of principals in a natural way.
In this case we  specified one  verifier, one prover, and one attacker, but allowed an unbounded number of sessions.

\begin{table}[h!]
\vspace{-.25cm}
\centering
\begin{tabular}{@{}l@{\ }c@{\ } |c@{\ }@{\ }c@{\ }|@{\ }c@{\ }@{\ }c@{\ }}
\toprule
Protocol
&
PreProc (sec)
&
Mafia
&
tm (sec)
&
Hijacking  & tm (sec) \\
\bottomrule
Brands and Chaum \cite{BC93}   &  3.0 &  $\checkmark$ & 4.3  & $\times$  & 11.4 \\
Meadows et al ($n_{V} \oplus n_{P}$,$P$) \cite{MPPC+07}&  3.7  &  $\checkmark$   & 1.3 & $\checkmark$  & 22.5     \\
Meadows et al ($n_{V}$,$n_{P} \oplus P$) \cite {MPPC+07}&  3.5 &  $\checkmark$   & 1.1 & $\times$   & 1.5  \\
Hancke and Kuhn \cite{HK05}&  1.2      & $\checkmark$    & 12.5   & $\checkmark$  & 0.7 \\
MAD   \cite{CBH03}&  5.1    &  $\checkmark$    &  110.5   & $\times$  & 318.8 \\
Swiss-Knife \cite{KAKS+08}&  3.1  &  $\checkmark$    & 4.8   & $\checkmark$  & 24.5  \\
Munilla et al. \cite{MP08}&  1.7 &  $\checkmark$    & 107.1 & $\checkmark$  & 4.5  \\
CRCS \cite{RC10}&  3.0 &  $\checkmark$   &  450.1   & $\times$ & 68.6  \\
TREAD \cite{ABGG+17}&  2.4 &  $\checkmark$      &  4.7  & $\times$ & 4.2  \\
\bottomrule
\end{tabular}
\vspace{.25cm}
\caption{Experiments performed for different distance-bounding protocols}
\label{fig:experiments}
\vspace{-5ex}
\end{table}

In Table~\ref{fig:experiments} above we present the results for the different distance-bounding protocols that we have analyzed.
Two attacks have been analyzed for each protocol:
a \emph{mafia fraud} attack
(i.e., an attacker tries to convince the verifier that an honest prover is closer to him than he really is),
and
a \emph{distance hijacking} attack
(i.e., a dishonest prover located far away succeeds in convincing a verifier that they are actually close, 
and he may only exploit the presence of honest participants in the neighborhood to achieve his goal).
Symbol $\checkmark$ means the property is satisfied and $\times$ means an attack was found.
The columns labelled $tm (sec)$ give the times in seconds that it took for a search to complete.  Finally the column labeled PreProc gives the time it takes Maude-NPA to perform some preprocessing on the specification that eliminates searches for some provably unreachable state.  This only needs to be done once, after which the results can be used for any query, so it is displayed separately. 

We note that, since our semantics is defined over  arbitrary metric spaces, not just Euclidean space, it is also necessary to verify that an attack returned by the tool is realizable over Euclidean space.  We note that the Mafia and hijacking  attacks  returned by Maude-NPA  in these experiments are all realizable on a straight line, and hence are  realizable over $n$-dimensional Euclidean space for any $n$. In general, this realizability check can be done via a final step in which the constraints with the Euclidean metric substituted for distance is checked via an SMT solver that supports checking quadratic constraints over the reals, such as Yices~\cite{yices}, Z3~\cite{z3}, or Mathematica~\cite{mathematica}.  Although this feature is not yet implemented in Maude-NPA, we have begun experimenting with these solvers. 

\vspace{-.25cm}
\section{Conclusions}\label{sec:conclusions}

We have developed a timed model for protocol analysis based on timing constraints,  and 
provided a prototype extension of Maude-NPA handling protocols
with time by taking advantage of
Maude's support of SMT solvers, as was done by Nigam et al. in \cite{NTU18}, and Maude-NPA's  support of constraint handling. We also performed some initial analyses to test the feasibility of the approach.  This approach should be applicable to other tools that support constraint handling.

There are several ways this work  can be extended.  One is to extend the ability of the tool to reason about a larger numbers or principals, in particular an unbounded  number of principals.   This includes an unbounded number of attackers; since each attacker must have its own location,  we cannot assume a single attacker as in Dolev-Yao.  Our specification and query language, and its semantics, supports reasoning about an unbounded number of principals, so this is a question of developing means of telling when a principal or state is redundant and developing state space reduction techniques based  on this.

Another important extension is to protocols that require the full Euclidean space model, in particular  those in  which location needs to be explicitly included in the constraints.   This includes for example protocols used for localization. For this, we have begun experimenting with SMT solvers that support solving quadratic constraints over the reals. 

Looking further afield, we consider adding different types of timing models.  In the timing model used in this paper, time is synonymous with distance.  But we may also be interested including other ways in which time is advanced, e.g. the amount of time a principal takes to perform internal processing tasks.  In our model, the method in which timing is advanced is specified by the $mte$ function, which is in turn used to generate constraints on which messages can be ordered.  Thus changing the way in which timing is advanced can be accomplished by modifying the $mte$ function.  Thus, potential future research includes design of   \emph{generic} $mte$   functions together with rules on their instantiation that guarantee soundness and completeness

Finally, there is also no reason for us to limit ourselves to time and location.  This approach  should be applicable to other  quantitative  properties as well.     For example, the inclusion of cost and utility would allow us to tackle new classes of problems not usually addressed by cryptographic protocol analysis tools, such as performance analyses (e.g., resistance against denial of service attacks), or even analysis of game-theoretic properties of protocols, thus opening up a whole new set of problems to explore.

\bibliographystyle{plain}

\newpage
\appendix

\section*{Additional Supporting Material}

In order to prove soundness and completeness of the transformation
in Appendix~\ref{proofs}, we first recall the untimed process algebra of Maude-NPA.

\section{
(Untimed) Process Algebra}\label{sec:PA}

Maude-NPA was originally defined \cite{EMM06,EMM09}
using \emph{strands} \cite{THG99}.
A process algebra that extends the strand
space model to naturally specify
protocols exhibiting choice points
was given in \cite{YEMM+16,YEMM19}.
Here we give a high-level summary of the untimed process algebra syntax of Maude-NPA, see
\cite{Maude-NPA}.

\subsection{Syntax of the Protocol Process Algebra }\label{sec:syntaxPA}

In the \emph{protocol process algebra} the behaviors of both honest principals
and the intruder are represented  by \emph{labeled processes}. 
Therefore, a protocol is specified as a set of labeled processes.
Each process performs a sequence of actions, namely, sending or receiving
a message, and may perform deterministic or non-deterministic choices. 
The protocol process algebra's syntax $\BNFSymbols$ is 
parameterized\footnote{More precisely, as explained in
Section \ref{sec:PASpecification},
they are parameterized
by a user-definable equational theory
$(\Sigma_{\mathcal{P}},E_{\mathcal{P}})$
having a sort \sort{Msg} of messages.}
 by a sort \sort{Msg} of messages and 
 a sort \sort{Cond} for conditional expressions.
 It has the following syntax:

{\small
\begin{align}
 \ProcConf~ &::= \LProc ~|~ \ProcConf ~\&~ \ProcConf ~|~ \emptyset \notag\\[-.8ex]
 \LProc ~&::= (\Role, I, J)~ \Proc \notag\\[-.8ex]
 \Proc~ &::=  \nil ~ |~ +\Msg~ | ~ -\Msg ~|~ \Proc \cdot \Proc  ~|  \notag \\[-.8ex]
              &~~~~~~~ \Proc~?~\Proc ~ | ~\textit{if} ~\Cond~ \textit{then} ~\Proc~ \textit{else} ~\Proc    \notag
\end{align}
} 

\begin{itemize}
	
\item $\ProcConf$ stands for a \emph{process configuration}, that is, a set of labeled processes. The symbol \& 
is used to denote set union for sets of labeled processes. 

\item $\LProc$ stands for a \emph{labeled process}, that is, a process $\Proc$ with a label $(\Role,I,J)$.
$\Role$ 
refers to the role of the process in the protocol (e.g., initiator or responder). $I$ is a 
natural number denoting the identity of the process, which distinguishes different instances(sessions) of a process specification.
$J$ indicates that the action at stage $J$ of the process specification 
will be the next one to be executed, that is, the first $J-1$ actions of the process for role $\Role$ have  already been executed. 
Note that we omit $I$ and $J$ in the protocol specification when both $I$ and $J$ are $0$. 

\item $\Proc$ defines the actions that can be executed within a process. ${+\Msg}$, and ${-\Msg}$ respectively denote
 sending out or receiving a message $\Msg$.  We assume a single channel, through which all messages are sent or received by the intruder.  
``$\Proc~\cdot~\Proc$" denotes \emph{sequential composition} of processes,
 where 
symbol \verb!_._! is associative and has the empty process $\nil$ 
as identity.
``$\Proc~?~\Proc$" denotes an explicit \emph{nondeterministic choice}, whereas 
``$\textit{if} ~\Cond~ \textit{then} \allowbreak ~\Proc~ \textit{else}
~\Proc$" denotes an explicit \emph{deterministic choice}, whose
continuation depends on the 
satisfaction of the constraint $\Cond$. 
In \cite{YEMM+16,YEMM19}, 
either equalities ($=$) or disequalities ($\neq$)  between message expressions were considered as constraints.

\end{itemize}

\vspace{-1ex}
Let 
$PS,~QS$, and $RS$ be process configurations,
and
$P,~Q$, and $R$ be protocol processes. 
The protocol syntax satisfies the following \emph{structural axioms}:

\begin{small}
\begin{minipage}{0.65\linewidth}
\begin{gather}
PS  \, \& \, QS = QS \, \& \, PS  \notag\\
(PS \, \& \, QS) \, \& \, RS = PS \, \& \, ( QS  \, \& \,  RS)\notag\\
(P \, \cdot \, Q) \cdot \, R  =  P \, \cdot \, (Q \, \cdot \, R)\notag
\end{gather}
\end{minipage}
\begin{minipage}{0.30\linewidth}
\begin{gather}
 PS \ \&\  \emptyset = PS  \notag\\
P \, \cdot \, \nil = P \notag\\
\nil \, \cdot \, P = P \notag
\end{gather}
\end{minipage}
\end{small}

The specification of the processes defining a protocol's behavior may contain 
some variables denoting information that 
the principal executing the process does not yet know, or that will be different in different executions. 
  In all protocol specifications we assume
three disjoint kinds of variables:
    \begin{itemize}
        	\item \textbf{\emph{fresh variables}}: these are not really variables in the standard sense,
        	 but \emph{names} for \emph{constant values} in a data type \sort{V_{fresh}}
        	 of unguessable values such as nonces. 
        	 A \emph{fresh variable} $f$ is always associated with a role $ro\in Role$ in the protocol.
        	Throughout this paper we will denote this kind of variables
        	as $f,f_1,f_2,\ldots$.

        	\item \textbf{\emph{choice variables}}: variables  first
        	appearing in a \emph{sent message} $\mathit{+M}$, which can be substituted 
        	by any value arbitrarily chosen from a possibly infinite domain.
        A choice variable indicates an  \emph{implicit non-deterministic choice}.
        Given a protocol with choice variables, each possible substitution of 
        these variables denotes a possible continuation of the protocol.
        	We always denote choice variables by uppercase
        	letters postfixed with the symbol ``?'' as a subscript, e.g., $\chV{A},\chV{B},\ldots$.

        	\item \textbf{\emph{pattern variables}}:  variables first appearing
        	in a \emph{received message} $\mathit{-M}$. These variables will be instantiated 
        	when matching sent and received messages. 
        	\emph{Implicit deterministic choices} are indicated by
                terms containing pattern variables, 
        	since failing to match a pattern term may lead to the rejection of a message.
        	A pattern term  plays the implicit role of a guard, 
        	so that, depending on the different ways of matching it, the protocol can have different continuations.
		    This kind of variables will be written with uppercase letters, e.g.,
		    $A,B,N_A,\ldots$.
        \end{itemize} 
        
Note that fresh variables are distinguished from other variables by having a specific sort \sort{Fresh}.
Choice variables or pattern variables can never have sort \sort{Fresh}.

We consider only \emph{well-formed} processes using the function
$\mathit{wf}: \mathit{\Proc} \rightarrow \mathit{Bool}$
defined in Figure~\ref{fig:wf}.
The definition of $\mathit{wf}$ uses an auxiliary function 
$\mathit{shVar} : \mathit{\Proc}  \rightarrow \mathit{VarSet}$
defined in Figure~\ref{fig:bvar}.
	Note that the well-formedness property implies that if a process begins with a deterministic choice
	action \textit{if T then Q else R}, then all variables in $T$ must be instantiated, and thus only one branch may be taken.  
	For this reason, it is undesirable to specify processes that begin with such an action.
	Furthermore, note that the well-formedness property  avoids 
	explicit choices where both possibilities
	are the $\nil$ process. That is, processes containing either \textit{(if T then nil else nil)}, or \textit{(nil ? nil)}, respectively.

\begin{figure}
{\small
\begin{align*}
& \wf(P \cdot +M) = \wf(P) 
\hspace{2ex} \textit{if} ~ (\var{M} \cap   \var{P}) \subseteq \bvar{P}\\[-.5ex]
  & \wf( P \cdot -M) = \wf(P)    
\hspace{2ex} \textit{if} ~ (\var{M} \cap   \var{P}) \subseteq \bvar{P}\\[-.5ex]
& \wf( P \cdot 
 (\textit{if} ~~T~ \textit{then} ~Q~ \textit{else} ~R))
  = \wf(P \cdot Q) \wedge \wf(P \cdot R) 
    \\
&\hspace{2ex}  \textit{if} \  P \neq \nil \ \textit{and } \ Q \neq \nil \ \textit{and }  \var{T} \subseteq \bvar{P}\\
&\wf(P \cdot (Q~?~R) )= \wf(P \cdot Q)  \wedge \wf(P \cdot R) 
\hspace{2ex} \textit{ if } Q \neq \nil \ \textit{or} R \neq \nil \\
&\wf(P \cdot ~\nil) = \wf(P) \\[-.5ex]
&\wf(\nil) = \textit{True}.
\end{align*}
}
\caption{The well-formed function}
\label{fig:wf}
\end{figure}

\begin{figure}
{\small
\begin{align*}
& \bvar{+M ~\cdot P} =  \var{M} \cup \bvar{P} \\[-.5ex]
& \bvar{-M~\cdot P} =   \var{M} \cup \bvar{P}  \\[-.5ex]
& \bvar{ (\textit{if} ~T~ \textit{then} ~P~ \textit{else} ~Q) ~\cdot R} \\[-.5ex]
& ~~~~~~ = 
    \var{T} \cup (\bvar{P} \cap \bvar{Q}) \cup \bvar{R}  \\[-.5ex]
& \bvar{ (P~?~Q)~\cdot R } =  ( \bvar{P} \cap \bvar{Q}) \cup \bvar{R} \\[-.5ex]
& \bvar{\nil} = \emptyset
\end{align*}
}
\caption{The shared variables function}
\label{fig:bvar}
\end{figure}

\subsection{Protocol Specification in Process Algebra} \label{sec:PASpecification}
We define a protocol $\caP$ in the 
protocol process algebra, written $\SpecPA$,
as a pair of the form $\SpecPA = ((\PASymbols,\PAPEq), \ProcPA)$,
where $(\PASymbols,\PAPEq)$ 
is the equational theory specifying the equational properties
of the cryptographic functions and the state structure,
and $\ProcPA$ is a term denoting a \emph{well-formed} process configuration representing
the behavior of the honest principals as well as the 
 capabilities of the attacker. That is, 
$\ProcPA = (\mathit{ro_1}) P_1 ~\&~ \ldots  ~\&~ (\mathit{ro_i}) P_i $,
where each $ro_k$, $1 \leq k \leq i$, 
is either the role of an honest principal or
identifies one of the capabilities of the attacker.
 $\ProcPA$ cannot contain two processes with the same label, 
i.e., the behavior of each honest principal, and each attacker capability
are represented by a \emph{unique} process. $\PAPEq =\PEq \cup \PAEq$ is a set of equations with $\PEq$ denoting the protocol's cryptographic properties and $\PAEq$ denoting the properties of process constructors. 
The set of equations $\PEq$ is user-definable and can
vary for different protocols.   Instead,
the set of equations $\PAEq$ is always the same for all protocols. 
$\PASymbols=\PSymbols \cup \BNFSymbols$ is the signature
defining the sorts and function symbols as follows:
 
 \begin{itemize}
 \item $\PSymbols$ is an order-sorted signature defining the sorts and function symbols
        for the messages that can be exchanged in protocol $\caP$.
       However, independently of protocol $\caP$, $\PSymbols$  must always
        have  two sorts  \sort{Msg} and \sort{Cond} as the top sorts in one of its 
       connected components.

 \item $\BNFSymbols$ is an order-sorted signature defining the sorts 
        and function symbols of the \emph{process algebra infrastructure}. 
        $\BNFSymbols$ corresponds
       exactly to the BNF definition 
       of the protocol process algebra's syntax in Section \ref{sec:syntaxPA}. 
\end{itemize}
Therefore, the syntax $\PASymbols$ of processes for $\caP$ will be in the union signature $\BNFSymbols \cup \PSymbols$,
consisting of the protocol-specific syntax $\PSymbols$, and the generic process syntax $\BNFSymbols$ through the shared sort \sort{Msg}.

\subsection{Process Algebra Semantics}\label{sec:semanticsPA}

Given a protocol $\mathcal{P}$, 
a \emph{state} of $\caP$ consists of a set 
of (possibly partially executed) \emph{labeled processes}, and a set of terms in the intruder knowledge $\{IK\}$.
That is, a state is a term of the form n
$\{ LP_1 \,\&\, \cdots \,\&\, LP_n ~|~ \{\textit{IK}\}\}$.
Given a state $St$ of this form,  we abuse notation and write 
$LP_k \in St$ if $LP_k$ is a labeled process in the set $LP_1 \,\&\, \cdots \,\&\, LP_n$.

The intruder knowledge \textit{IK} models the \emph{single} channel
through which all messages are sent and received.
Messages are stored in the form $\inI{M}$.
We consider an active attacker who has complete control of the channel, 
i.e, can read, alter, redirect, and delete traffic as well as create
its own messages by means of
\emph{intruder processes}.  That is, the purpose of some $LP_k \in St$
is to perform message-manipulation actions for the intruder.

State changes are defined by a set $\PARls$ of \emph{rewrite rules}, such that
the rewrite theory $(\PAStateSymbols, \PAPEq , \PARls)$ characterizes the behavior of 
protocol $\caP$, where  $\PAStateSymbols$ extends $\PASymbols$
by adding state constructor symbols.
We assume that a protocol's
execution begins with an empty state, i.e., a state with an empty
set of labeled processes, and an empty network. 
That is, the initial state is always of the form 
$ \{ \emptyset ~|~ \{\emptyset\}  \}$.
Each transition rule in $\PARls$ is labeled with a tuple of the form $\mathit{(ro,i, j, a,n)}$, where:

\begin{itemize}
	\item $\mathit{ro}$ is  the role of the labeled process being executed in the transition.
     
    \item $i$ denotes the identifier of the labeled process being executed in the transition. Since there can be more than one process instance of the same role in a process state, $i$ is used to distinguish different instances, i.e., $ro$ and $i$ together uniquely identify a process in a state.
	
	\item $j$ denotes the process' step number since its beginning.

	\item $a$ is a ground term identifying the action that is being performed in the transition.
		  It has different possible values: 
		    ``$+m$'' or ``$-m$'' if the message $m$ was sent  (and added to the network) or received, respectively;
		    ``$m$'' if the message $m$ was sent  but did not increase the network,
		    ``$?$'' if the transition performs an explicit non-deterministic choice, or ``$\mathit{T}$'' if the transition 
		  performs an explicit deterministic choice.
	\item $n$ is a number that, if the action that is being executed is an explicit choice, indicates which branch has been chosen as the process continuation.
		  In this case $n$ takes the value of either $1$ or $2$.
	      If the transition does not perform any explicit choice, then $n=0$.
\end{itemize}

The set $\PARls$ of transition rules 
that define the 
execution of a state are given in \cite{YEMM+16,YEMM19}.
Note that in the transition rules $\PARls$ shown below, $PS$ denotes the rest of labeled processes of the state (which can be 
the empty set $\emptyset$).

\begin{itemize}
\item 
The action of \emph{sending a message} is represented by the two transition rules
below. Since we assume the intruder has complete control of the network, 
it can learn any message sent by other principals.
Rule~\eqref{eq:pa-output-modIK} denotes the case in which the
sent message is added  to the intruder knowledge.
Note that this rule can only be applied if the intruder has not
already learnt that message.
Rule~\eqref{eq:pa-output-noModIK}
denotes the case in which the intruder chooses not to learn the message, i.e., the intruder knowledge is not modified,
and, thus, no condition needs to be checked. 
Since choice variables denote messages that are nondeterministically chosen, all (possibly infinitely many) admissible ground substitutions for the choice variables are possible behaviors.

\begin{small}
 \begin{align}
&\{ (ro,i, j)~( +M \cdot P) ~\&~ PS  ~|~ \{IK\}  \} \notag\\[-.5ex]
&  \longrightarrow_{(ro,i,j,+M\sigma,0)}
\{ (ro,i, j+1)~P \sigma ~\&~ PS ~|~ \{ \inI{M\sigma}, IK\}  \} \notag\\[-.5ex]
&  \textit{ if } ( \inI{M\sigma}) \notin \textit{IK}  \notag\\[-.5ex]
&  \textit{where } \sigma  \ 
 \textit{is a ground substitution}
 \textit{ binding choice variables}  
\textit{ in} \  M   \eqname{{\small PA++}}
 \label{eq:pa-output-modIK}
  \end{align}

  \begin{align}
  &\{ (ro,i, j)~( +M \cdot P) ~\&~ PS  ~|~ \{IK\}  \} \notag\\[-.5ex]
  &  \longrightarrow_{(ro, i, j, M\sigma,0)}
 \{ (ro,i, j+1)~P\sigma ~\&~ PS ~|~ \{IK\} \} \notag\\[-.5ex]
 &  \textit{where } \sigma 
  \ \textit{is a ground substitution} 
\  \textit{binding choice variables} 
\textit{ in} \  M  \eqname{PA+}
 \label{eq:pa-output-noModIK}
 \end{align}
\end{small}

\item As shown in the rule below, a process can \emph{receive a message} matching a pattern $M$ if
there  is a message $M'$ in the intruder knowledge, i.e., a message
previously sent either by some honest principal or by some intruder process, 
that matches the  pattern message $M$. 
After receiving this message the process
will continue with its variables instantiated by the matching substitution, 
which takes place modulo the equations $E_\caP$. Note that the intruder can ``delete" a message   via choosing not to learn it (executing Rule \ref{eq:pa-output-noModIK} instead of Rule \ref{eq:pa-output-modIK})  or not
to deliver it (failing to execute Rule \ref{eq:pa-input}).

\begin{small}
\begin{align}
&\{	(ro,i, j)~( -M\cdot P) ~\&~ PS \mid \{\inI{M'}, IK\} \}\notag\\[-.5ex]
&	\longrightarrow_{(ro,i,j,-M\sigma,0)}
\{	(ro,i, j+1)~P\sigma ~\&~ PS \mid \{ \inI{M'}, IK\}  \} \notag\\[-.5ex]
&~\textit{if}~ M'=_{\PEq} M\sigma  \eqname{PA-}
	\label{eq:pa-input}
\end{align}
\end{small}

\item The two transition rules shown below define the operational semantics of 
\emph{explicit deterministic choice}s. That is, the 
 operational semantics of an
  $\textit{if} ~T~ \textit{then} ~P~ \allowbreak \textit{else} ~Q$ expression.
More specifically, rule~\eqref{eq:pa-detBranch1} describes  the
\textit{then} case, i.e., if the constraint $T$ is satisfied, 
the process will continue as $P$.
Rule~\eqref{eq:pa-detBranch2} describes the \textit{else} case, 
that is, if the constraint $T$ is \emph{not} satisfied, the process will continue as $Q$.
Note that, since we only consider well-formed processes, these transition
rules will only be applied if $j \ge 1$. Note also that since $T$ has been fully substituted by the time the if-then-else is executed, 
the validity 
of $T$ can be easily checked. 

\begin{small}
\begin{align}
 & \{ (ro,i, j)~((\textit{if} ~T~ \textit{then} ~P~ \textit{else} ~Q) ~\cdot R) ~\&~ PS \mid \{IK\}\}\notag\\[-.5ex]
 & \longrightarrow_{(ro,i, j, T,1)}
\{ (ro,i, j+1)~(P\cdot R) ~\&~ PS \mid \{IK\} \}
 ~~if~T  
 \eqname{PAif1}
 \label{eq:pa-detBranch1}
\end{align}

\begin{align}
& \{ (ro,i, j)~( (\textit{if} ~T~ \textit{then} ~P~ \textit{else} ~Q) ~\cdot R) ~\&~ PS \mid \{IK\} \} \notag\\[-.5ex]
& \longrightarrow_{(ro,i,j,T,2)}  
\{ (ro,i, j+1)~(Q\cdot R) ~\&~ PS \mid \{IK\} \}
 ~~if~  \neg T 
 \eqname{PAif2}
  \label{eq:pa-detBranch2}
\end{align}
\end{small}

\item The two transition rules below define the semantics of 
\emph{explicit non-deterministic choice} $P~?~Q$. In this case, 
the process can continue either as $P$, denoted by   rule~\eqref{eq:pa-nonDetBranch1},
or as $Q$, denoted by  rule~\eqref{eq:pa-nonDetBranch2}. 
Note that this decision is made non-deterministically. 

\begin{small}
\begin{align}
& \{ (ro,i, j)~((P~?~Q)\cdot R) ~\&~ PS \mid \{IK\} \}\notag\\[-.5ex]
&\longrightarrow_{(ro,i,j,?,1)} 
\{ (ro,i, j+1)~(P \cdot R) ~\&~ PS \mid \{IK\}  \} 
\eqname{PA?1}
\label{eq:pa-nonDetBranch1}\\[.5ex]
& \{ (ro,i, j)~((P~?~Q)\cdot R) ~\&~ PS \mid \{IK\} \}\notag\\[-.5ex]
&\longrightarrow_{(ro,i,j,?,2)} 
\{ (ro,i,j+1)(Q \cdot R) ~\&~ PS \mid \{IK\} \} 
\eqname{PA?2}
\label{eq:pa-nonDetBranch2}
\end{align}
\end{small}

\item 
The transition rules shown below describe the \emph{introduction of a new 
process} from the specification into the state, 
which allows us to support an unbounded session 
model. 
Recall that fresh variables are associated with a role and an identifier. 
Therefore, whenever a new process is introduced: 
(a) the largest process identifier $(i)$ will be increased by 1, and 
(b) new names will be assigned to the fresh variables in the new process.

\vspace{-1.5ex}
\begin{small}
 \begin{align}
 \left \{
 \begin{array}{@{}l@{}}
  \forall \  (ro)~ P_k \in \ProcPA\notag\\[.5ex]
  \{ PS \mid \{IK\} \} \notag\ \\[-.5ex]
  \longrightarrow_{(ro, i+1, 1,A,Num)}
 \{ (ro, i+1, 2)~P'_k ~\&~ PS \mid \{IK'\}  \}\notag\\ [1ex]
  \textsf{IF} ~  \{(ro,i+1,1)~ P_k\rho_{ro,i+1} \mid \{IK\}  \} \notag\\[-.5ex]
\longrightarrow_{(ro,i+1, 1,A,Num)} 
 \{	(ro,i+1, 2)~P'_k ~ \mid \{IK'\} \} \\[1ex]
  \textit{where } \rho_\mathit{ro,i+1} 
 \ \textit{is a fresh substitution}, \notag\\
 \  i= \MaxProcId(PS, ro)  
 \end{array}
 \right \}\eqname{PA\&}
 \label{eq:pa-new}
 \end{align}
 \end{small}
 \vspace{-1.5ex}

 \noindent
 Note that $A$ denotes the action of the state transition,
 and can be of any of the forms explained above.
 The function $\MaxProcId$ is defined as follows:

\vspace{-1.5ex}
\begin{small}
\begin{align*}
& \MaxProcId(\emptyset, ro) = 0 \notag \\
& \MaxProcId((ro, i, j) P \& PS, ro)  = max(\MaxProcId(PS, ro), i) \notag \\
& \MaxProcId((ro', i, j) P \& PS, ro) = \MaxProcId(PS, ro) 
\ \ \ \textit{if} \ ro \neq ro' \notag 
\end{align*}
\end{small}
\vspace{-1.5ex}

\noindent
where $PS$ denotes a process configuration, $P$ a process, and $ro, ro'$  role names.
\end{itemize} 
 
 Therefore, the behavior of a protocol in the process algebra is
 defined by the set of transition rules
 $\PARls = \{ \eqref{eq:pa-output-modIK},\allowbreak  \eqref{eq:pa-output-noModIK},\allowbreak 
              \eqref{eq:pa-input},\allowbreak
              \eqref{eq:pa-detBranch1},\allowbreak    \eqref{eq:pa-detBranch2},\allowbreak
              \eqref{eq:pa-nonDetBranch1}, \allowbreak  \eqref{eq:pa-nonDetBranch2} \} \cup
              \eqref{eq:pa-new} $.
              
The main result in \cite{YEMM+16,YEMM19} is a 
bisimulation between the strand state space generated by 
the narrowing-based backwards semantics of \cite{EMM06,EMM09}
and 
the transition rules $\PARls$ above, associated to the forwards semantics for process algebra.
This is nontrivial, since there are three major
 ways in which the two semantics differ.  
 The first is that  processes  ``forget'' their past,  while strands ``remember'' theirs.  
 The second is that Maude-NPA uses backwards search, while the process
 algebra proceeds forwards.  
 The third is that Maude-NPA performs symbolic reachability analysis using
 terms with variables, while the process algebra considers only ground terms. 

\section{Soundness and Completeness Proofs}\label{proofs}

The problem with adding choice variables $\chV{t}$ and $\chV{AS}$ to untimed sending messages is that 
they may be replaced by values that do not have a counterpart at the timed process algebra.
Let us clarify the two relevant sets of states.

\begin{definition}[TPA-State]
Given a protocol $\caP$, 
its time  process specification
$((\TPASymbols,\allowbreak\TPAPEq),\allowbreak \ProcTPA)$
and
its associated rewrite theory $(\TPAStateSymbols,\allowbreak \TPAPEq,\allowbreak \TPARls)$,
a \emph{TPA-State} 
 is a state in the time  process algebra semantics
that is \emph{reachable} from the initial state 
$\{\emptyset ~|~ \{\emptyset\} ~|~ 0.0\}$. 
\end{definition}

\begin{definition}[PA-State]
Given a protocol $\caP$, 
its time  process specification
$((\TPASymbols,\allowbreak\TPAPEq),\allowbreak \ProcTPA)$,
the simplified version
$((\PASymbols,\allowbreak\PAPEq),\allowbreak \tpatopa(\ProcTPA))$,
and its associated rewrite theory $(\PAStateSymbols,\allowbreak \PAPEq,\allowbreak \PARls)$,
a \emph{PA-State} 
 is a state in the process algebra semantics
that is \emph{reachable} from the initial state 
$\{\emptyset ~|~ \{\emptyset\} \}$. 
\end{definition}

We consider \emph{successful transition sequences} where the additional conditional expressions are evaluated.
              
\begin{definition}[Successful PA-state]
Given a protocol $\caP$, 
we say a PA-state is \emph{successful} if for each process $(ro,i, j) P_1 \cdots P_n$ in the state, 
the first action $P_1$ is not a conditional expression 
introduced by the transformation $\tpatopa$,
and the conditions of
all the conditional expressions 
introduced by the transformation $\tpatopa$ were evaluated to true in the sequence reaching the PA-state from the initial PA-state.
\end{definition}              

The transition rule
\eqref{eq:time} does not have a counterpart without time. 
The transition rule \eqref{eq:time} is related to a proper interleaving of input actions, in such a way that closer participants receive a message earlier than others.

\begin{definition}[Realizable PA-state]\label{def:realizable}
Given a protocol $\caP$, 
a sequence of PA-states

\noindent
\begin{align*}
\{ \emptyset ~|~ \{\emptyset\}\} &\longrightarrow_{(\textit{ro}_1,i_1,j_1,A_1,k_1)} \textit{PA}_1 \\
&\longrightarrow_{(\textit{ro}_2,i_2,j_2,A_2,k_2)} \textit{PA}_2 \\[-2ex]
&\hspace{3ex} \vdots\\[-2ex]
&\longrightarrow_{(\textit{ro}_n,i_n,j_n,A_n,k_n)} \textit{PA}_n 
\end{align*}

\noindent
is called \emph{realizable} if the following two sets of distance constraints are satisfied:

\begin{itemize}
\item \emph{Triangular inequalities}. 
For all distinct $1 \leq j_1,j_2,j_3 \leq n$, $d((ro_{j_1},i_{j_1}),(ro_{j_2},i_{j_2})) \leq d((ro_{j_1},i_{j_1}),(ro_{j_3},i_{j_3})) + d((ro_{j_3},i_{j_3}),(ro_{j_2},i_{j_2}))$.
\item \emph{Time sequence monotonicity}.
For $1\leq j\leq n$ 
such that
$A_j = -(M @ \ldots)$
and $M\ @\ ({(ro'_0,i'_0) : t} \to ((ro'_1,i'_1) : t_1 \cdots (ro_j,i_j): t_j \cdots (ro'_m,i'_m) : t_m))$ 
is stored in the \textit{Net} component of $\textit{PA}_j$,
$d((ro'_0,i'_0),(ro'_k,i_k)) \leq d((ro'_0,i'_0),(ro_{j},i_{j}))$,
for all $1\leq k\leq j$.
\end{itemize}
We say a PA-state is \emph{realizable} if 
the sequence reaching the PA-state from the initial PA-state is.
\end{definition}

\begin{lemma}[Realizable PA-state]
Given a protocol $\caP$ and
a \emph{realizable} sequence of PA-states

\noindent
\begin{align*}
\{ \emptyset ~|~ \{\emptyset\}\} &\longrightarrow_{(\textit{ro}_1,i_1,j_1,A_1,k_1)} \textit{PA}_1 \\
&\longrightarrow_{(\textit{ro}_2,i_2,j_2,A_2,k_2)} \textit{PA}_2 \\[-2ex]
&\hspace{3ex} \vdots\\[-2ex]
&\longrightarrow_{(\textit{ro}_n,i_n,j_n,A_n,k_n)} \textit{PA}_n 
\end{align*}

\noindent
there exists a sequence of TPA-states

\noindent
\begin{align*}
\{ \emptyset ~|~ \{\emptyset\}  ~|~ 0.0\} &\longrightarrow_{(\textit{ro}'_1,i'_1,j'_1,A'_1,k'_1,t'_1)} \textit{TPA}_1 \\
&\longrightarrow_{(\textit{ro}'_2,i'_2,j'_2,A'_2,k'_2,t'_2)} \textit{TPA}_2 \\[-2ex]
&\hspace{3ex} \vdots\\[-2ex]
&\longrightarrow_{(\textit{ro}'_{n'},i'_{n'},j'_{n'},A'_{n'},k'_{n'},t'_{n'})} \textit{TPA}_{n'} 
\end{align*}

\noindent
such that for every two input steps 
$\textit{PA}_a \longrightarrow_{(\textit{ro},i,j_a,-(m @ \ldots),0)} \textit{PA}_{a+1}$,
$a\in\{0,\ldots,n-1\}$,
$\textit{PA}_b \longrightarrow_{(\textit{ro}',i',j_b,-(m'@\ldots),0)} \textit{PA}_{b+1}$,
$b\in\{0,\ldots,\allowbreak n-1\}$
s.t. $a < b$,
there exits 
two input steps with the same roles 
$(\textit{ro},i)$ and $(\textit{ro}',i')$
and messages $m$ and $m'$,
$\textit{TPA}_c \longrightarrow_{(\textit{ro},i,j_c,-(m),0,t_c)} \textit{TPA}_{c+1}$,
$c\in\{0,\ldots,{n'}-1\}$,
$\textit{TPA}_d \longrightarrow_{(\textit{ro}',i',j_d,-(m'),0,t_d)} \textit{TPA}_{d+1}$,
$d\in\{0,\ldots,{n'}-1\}$
s.t. $c < d$, which implies that $t_c < t_d$.
\end{lemma}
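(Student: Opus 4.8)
The plan is to proceed by induction on the length $n$ of the given realizable PA sequence, building the TPA sequence incrementally and maintaining a correspondence that records, for each executed PA step, a matching block of TPA steps together with an assignment of the PA time variables to concrete values of the TPA global clock. First I would read off, from the realizable PA sequence, the concrete time values forced by the accumulated constraints: every output action fixes the value eventually bound to $\chV{t}$, and every stored network message $M\ @\ (A:t \to (B_1:t_1 \cdots B_k:t_k \cdots))$ fixes arrival times $t_k = t + d(A,B_k)$ through the conditionals introduced by $\tpatopa$. These are exactly the values the TPA global clock must realize, so the inductive invariant is that after translating a prefix of the PA run the current TPA state has its \textit{Net} component equal, up to the timing annotations, to that of the corresponding PA-state, and its global clock equal to the largest arrival time consumed so far.

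For the inductive step I would translate each PA transition to its timed counterpart. An output step \eqref{eq:pa-output-modIK} or \eqref{eq:pa-output-noModIK} maps to \eqref{eq:tpa-output-modIK} or \eqref{eq:tpa-output-noModIK}, instantiating the choice variable $\chV{t}$ to the current global clock; a non-deterministic choice \eqref{eq:pa-nonDetBranch1} or \eqref{eq:pa-nonDetBranch2} and a genuine deterministic choice \eqref{eq:pa-detBranch1} or \eqref{eq:pa-detBranch2} map directly to \eqref{eq:tpa-nonDetBranch1}, \eqref{eq:tpa-nonDetBranch2}, \eqref{eq:tpa-detBranch1}, \eqref{eq:tpa-detBranch2}; and a New step maps to \eqref{eq:tpa-new}. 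The conditionals introduced by $\tpatopa$, testing $t = t'+d(\dots)\wedge d(\dots)\ge 0$, have no separate TPA counterpart: because the PA-state is \emph{successful} these tests evaluated to true, and the matching TPA input rule \eqref{eq:tpa-input} already enforces $\bar t = t'+d(\dots)$, so they are discharged automatically. The delicate case is an input step \eqref{eq:pa-input}: here I would first interpolate a block of global-time advancements \eqref{eq:time}, each advancing the clock by the $\textit{mte}$ increment, until it reaches the arrival time $t'+d((ro',k),(ro,i))$ of the message being consumed, and then apply \eqref{eq:tpa-input}.

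The heart of the argument, and the step I expect to be the main obstacle, is showing that this schedule of \eqref{eq:time} steps is executable in the order dictated by the PA input sequence, i.e. that the closest-first delivery discipline built into $\textit{mte}$ never forces a different interleaving than the one recorded in the PA run. This is precisely where the two conditions of Definition~\ref{def:realizable} enter. The \emph{triangular inequality} condition guarantees that $d$ is a genuine metric, so that all arrival times $t'+d(\dots)$ and all $\textit{mte}$ values are well defined and mutually consistent. The \emph{time sequence monotonicity} condition guarantees that whenever an input for recipient $(ro,i)$ of a message with sender $(ro'_0,i'_0)$ is taken, there is no omitted, strictly closer recipient of that same message; hence no mandatory earlier reception is skipped, and the TPA Time rule can advance the clock to exactly this arrival time without an intervening forced delivery. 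Tracking, for each stored message, its sender and the distances to its recipients, I would argue that the global clock is non-decreasing throughout the construction and that each input is consumed at clock value equal to its arrival time.

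Finally, for two PA inputs at positions $a<b$, their translations occur at TPA positions $c<d$ by construction, since the translation is order-preserving on the shared steps and only interpolates \eqref{eq:time} steps, and their clock values $t_c$ and $t_d$ are the respective arrival times. Monotonicity of the global clock gives $t_c \le t_d$, and strictness $t_c < t_d$ follows because the later input is consumed only after at least one strictly positive \eqref{eq:time} increment separating the two arrivals, using that distances between distinct participants are strictly positive. This yields the claimed $c<d$ with $t_c<t_d$, completing the lemma.
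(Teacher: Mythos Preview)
The paper states this lemma without proof: in the appendix it appears between Definition~\ref{def:realizable} and the definition of blocked processes, with no proof environment, and is then used implicitly in the soundness proposition. So there is nothing to compare your argument against; your sketch is already more detailed than anything the paper offers for this particular statement.

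That said, two points in your argument deserve scrutiny. First, your justification of the \emph{strict} inequality $t_c<t_d$ is not convincing. You argue that at least one strictly positive \eqref{eq:time} step separates the two receptions because distances between distinct participants are positive, but nothing prevents two \emph{different} messages from having coinciding arrival times (e.g.\ $M_1$ sent at time $0$ to a recipient at distance $1$ and $M_2$ sent at time $0.5$ to a recipient at distance $0.5$). In that situation the TPA clock does not advance between the two inputs and one only gets $t_c\le t_d$. Whether this is a flaw in the lemma's wording or an additional hypothesis you must extract from ``realizable'' is unclear, but your current argument does not close the gap.

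Second, the time-sequence-monotonicity clause of Definition~\ref{def:realizable} constrains only the order among recipients of the \emph{same} stored message. Your scheduling argument for the interpolated \eqref{eq:time} steps leans on it to rule out ``mandatory earlier receptions'', but the obstruction to advancing the clock to the desired arrival time can equally come from a \emph{different} pending message whose arrival time is strictly smaller; the $\textit{mte}$ minimum ranges over all processes and all matching messages in \textit{Net}. The realizability definition as written does not obviously exclude a PA run in which two receptions of distinct messages occur in the opposite order to their arrival times, yet your inductive construction would need the TPA clock to pass the earlier arrival time without consuming it (which \eqref{eq:tpa-input}'s equality condition $\bar t = t'+d(\ldots)$ forbids). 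You should either argue that such PA runs are already excluded, or note explicitly that the definition must be strengthened for the lemma to go through.
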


The transition rule \eqref{eq:tpa-new} corresponds to \eqref{eq:pa-new} but they are very different.
On the one hand, the transition rule \eqref{eq:pa-new} adds a new process if it starts with either an output message, a conditional, a non-deterministic choice, 
or an input message that synchronizes with the intruder knowledge.
On the other hand, the transition rule \eqref{eq:tpa-new} adds a new process without advancing it.
That is, it can add a time  process starting with an input message that cannot be synchronized with the current intruder knowledge.
In this case, the process never moves forward, and so it can be ignored.

\begin{definition}[Blocked Process]
Given a protocol $\caP$
and a TPA-state 
\linebreak $\{ TLP_1 \,\&\, \cdots \,\&\, TLP_n ~|~ \{\textit{IK}\}\}$, 
we say a time  process $TLP_i=(ro,i, j)\ P_i$ is \emph{blocked} if 
there is no transition step from 
$ \{(ro,i,j)~ P_i \mid \{IK\}  \}$.
\end{definition}

\begin{definition}[Non-void TPA-state]
Given a protocol $\caP$, 
we say a TPA-state is \emph{non-void} if 
the last transition step in the sequence reaching the state was not \eqref{eq:tpa-new}.
\end{definition}              

We now 
define the relation $\caH$ that relates PA and TPA states. 

\begin{definition}[Relation $\caH$]
 Given a protocol $\caP$,
a TPA-State $\textit{TPA}=\{TLP_1 \& \allowbreak \ldots \& TLP_n \mid \{ Net \} \mid \bar{t} \}$ 
 and 
a successful PA-State $\textit{PA}=\{LP_1 \& \allowbreak \ldots \& LP_m \mid \{ IK \} \}$ ,
we have that $(\textit{TPA},\textit{PA}) \in \caH$ iff:
\begin{itemize}
 	\item[(i)]  For each non-blocked timed process $TLP_k=(ro,i, j)~P_k $, $1\leq k \leq n$,
 	there exists a process $LP_{k'}=(ro,i, j')~P'_k$, $1\leq k' \leq m$,
 	such that $P'_k=\tpatopax(P_k,ro,i)$; and viceversa.
 	\item[(ii)] For each stored message $(M@((\textit{ro},i) : t \to AS))$ in \textit{Net},
	there exists $\inI{(M@((\textit{ro},i) : t \to AS \uplus AS'))}$ in $IK$; and viceversa.
 \end{itemize} 
\end{definition}

We are able to prove soundness and completeness.

\begin{proposition}[Completeness]\label{prp:comp}
Given a protocol $\caP$
and a non-void TPA-state \textit{TPA}, there exists a successful PA-state \textit{PA} s.t. $\textit{TPA}\ \caH\ \textit{PA}$.
\end{proposition}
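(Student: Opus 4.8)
The plan is to argue by induction on the length $n$ of the timed derivation
\[
\{\emptyset \mid \{\emptyset\} \mid 0.0\}\ \longrightarrow^{*}_{\TPARls}\ \textit{TPA}
\]
reaching the given non-void state, translating it step by step into a derivation $\{\emptyset \mid \{\emptyset\}\} \longrightarrow^{*}_{\PARls} \textit{PA}$ of the transformed protocol $\tpatopa(\ProcTPA)$ while maintaining the relation $\caH$ as an invariant. Because the timed rule \eqref{eq:tpa-new} introduces a process without advancing it, peeling steps from the end may land on a void predecessor; to avoid this I would strengthen the induction hypothesis so that at \emph{every} intermediate timed state the advanced timed processes are matched by PA processes via $\tpatopax$, while any process still sitting at its initial action is kept \emph{pending} on the PA side (not yet introduced). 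At the non-void states this bookkeeping specialises exactly to $\caH$, and the base case is the pair of empty initial states, for which clauses (i) and (ii) hold vacuously.

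First I would dispatch the routine rules. The global-time rule \eqref{eq:time} changes only the clock $\bar{t}$, which $\caH$ does not mention, so it is mirrored by taking no PA step at all. The sending rules \eqref{eq:tpa-output-modIK} and \eqref{eq:tpa-output-noModIK} are mirrored by \eqref{eq:pa-output-modIK} and \eqref{eq:pa-output-noModIK}, instantiating the choice variables introduced by $\tpatopax$ with the concrete data of the timed step, namely $\chV{t}\mapsto\bar{t}$ and $\chV{AS}\mapsto\emptyset$; clause (ii) stays in step since the stored message $M@((\textit{ro},i):\bar{t}\to\emptyset)$ matches $\inI{M@((\textit{ro},i):\bar{t}\to\emptyset)}$. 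The explicit choices \eqref{eq:tpa-detBranch1}, \eqref{eq:tpa-detBranch2}, \eqref{eq:tpa-nonDetBranch1}, \eqref{eq:tpa-nonDetBranch2} are mirrored by their PA counterparts on the \emph{same} branch; these conditionals come from the original protocol and are \emph{not} the ones introduced by $\tpatopa$, so either branch is compatible with successfulness.

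The delicate point is reception. A single timed input \eqref{eq:tpa-input} fires only under its side condition $\bar{t}=t'+d((\textit{ro}',k),(\textit{ro},i))$, whereas on the PA side $\tpatopax$ turns an input into an input \emph{immediately followed} by the guard $t=t'+d(\ldots)\wedge d(\ldots)\geq 0$. I would therefore mirror the timed input by the PA input \eqref{eq:pa-input} followed by the then-branch \eqref{eq:pa-detBranch1} of this introduced conditional. Instantiating its variables with the concrete values of the timed step makes the guard evaluate to true \emph{precisely because} \eqref{eq:tpa-input} fired under the same equation; this is what makes the resulting PA-state \emph{successful}, and after consuming the conditional the PA continuation equals $\tpatopax(P,\textit{ro},i)$, restoring clause (i). Clause (ii) is preserved because both semantics append the receiver $(\textit{ro},i):\bar{t}$ to the recipient set of the stored message, the extra component $AS'$ in the definition of $\caH$ absorbing any bookkeeping slack.

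The main obstacle is reconciling \eqref{eq:tpa-new} with \eqref{eq:pa-new}: the timed rule merely adds a process at its first action, whereas the untimed rule both adds the process \emph{and} performs that action, firing only when it is enabled. My approach is to \emph{defer}: a timed \eqref{eq:tpa-new} is not translated on its own but is \emph{paired} with the first genuine action later performed on the introduced process, and the pair is realised by a single \eqref{eq:pa-new} step (together with the introduced conditional of the previous paragraph when that first action is an input). A process introduced by \eqref{eq:tpa-new} that never manages to act is stuck at its initial reception with no synchronising message, i.e.\ it is a \emph{blocked} process; since $\caH$ quantifies only over non-blocked timed processes, such a process needs no PA counterpart and is discarded. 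As \eqref{eq:tpa-new} commutes with every other rule—it neither reads nor writes $\textit{Net}$ or the clock—this pairing is always available, so that in the final non-void state every non-blocked process has already performed its first action and is matched by an introduced PA process, while the blocked ones are ignored. Collecting the cases, the invariant holds at $\textit{TPA}$, yielding the required successful PA-state with $\textit{TPA}\ \caH\ \textit{PA}$.
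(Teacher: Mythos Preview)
Your overall strategy---induction on the length of the timed derivation with a case analysis on the last rule---matches the paper's proof, and your treatment of \eqref{eq:tpa-new} is in fact more careful than the paper's: you correctly observe that intermediate states may be void and strengthen the invariant with pending processes, whereas the paper only remarks that the \emph{final} step cannot be \eqref{eq:tpa-new} and glosses over void predecessors.

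There is, however, a genuine gap in your handling of send and receive with respect to clause~(ii) of $\caH$. You instantiate $\chV{AS}\mapsto\emptyset$ at send time and then assert that on reception ``both semantics append the receiver $(\textit{ro},i):\bar t$ to the recipient set of the stored message.'' This is false for the untimed side: rule \eqref{eq:pa-input} leaves the intruder knowledge unchanged---the fact $\inI{M'}$ appears identically on both sides. With $\chV{AS}\mapsto\emptyset$ the stored PA fact is $\inI{M@((\textit{ro},i):\bar t\to\emptyset)}$, and the translated receive pattern $-(M@((\textit{ro}',i'):t'\to((\textit{ro},i):t)\uplus AS))$ cannot match it, since the pattern forces a nonempty recipient set; so your mirrored \eqref{eq:pa-input} step does not even fire. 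Moreover, once \eqref{eq:tpa-input} enlarges the recipient set in $\textit{Net}$ while $\textit{IK}$ keeps the empty set, clause~(ii) of $\caH$ fails outright: it demands that the $\textit{IK}$ recipient set be a \emph{superset} of the $\textit{Net}$ one, not the other way round. The repair is to instantiate $\chV{AS}$ at send time with the \emph{final} recipient set that the message accumulates along the given TPA trace (a look-ahead that is legitimate because the whole timed derivation is the datum of the proposition); then clause~(ii) holds throughout with $AS'$ the set of not-yet-arrived recipients, and each translated receive pattern matches.
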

\begin{proof}
By induction on the length $n$ of 

\noindent
\begin{align*}
\{ \emptyset ~|~ \{\emptyset\}  ~|~ 0\} &\longrightarrow_{(\textit{ro}_1,i_1,j_1,A_1,k_1,t_1)} \textit{TPA}_1 \\
&\longrightarrow_{(\textit{ro}_2,i_2,j_2,A_2,k_2,t_2)} \textit{TPA}_2 \\[-2ex]
&\hspace{3ex} \vdots\\[-2ex]
&\longrightarrow_{(\textit{ro}_{n},i_{n},j_{n},A_{n},k_{n},t_{n})} \textit{TPA}_{n} 
\end{align*}

\noindent
If $n=0$, then the conclusion follows.
If $n>0$, then the induction hypothesis says that there exists a sequence

\noindent
\begin{align*}
\{ \emptyset ~|~ \{\emptyset\} \} &\longrightarrow_{(\textit{ro}'_1,i'_1,j'_1,A'_1,k'_1)} \textit{PA}_1 \\
&\longrightarrow_{(\textit{ro}'_2,i'_2,j'_2,A'_2,k'_2)} \textit{PA}_2 \\[-2ex]
&\hspace{3ex} \vdots\\[-2ex]
&\longrightarrow_{(\textit{ro`}_{m-1},i'_{m-1},j'_{m-1},A'_{m-1},k'_{m-1})} \textit{PA}_{m-1} 
\end{align*}

\noindent
such that $\textit{TPA}_{n-1}\ \caH\ \textit{PA}_{m-1}$.
Let us consider the transition rule used in the step $n$.
\begin{itemize}
\item Transition rules               \eqref{eq:tpa-detBranch1},\allowbreak  \eqref{eq:tpa-detBranch2},\allowbreak 
              \eqref{eq:tpa-nonDetBranch1},\allowbreak  and \eqref{eq:tpa-nonDetBranch2}  are immediate, since they do not use any information from the network
              or the global time $\bar{t}$.
\item Transition rules               \eqref{eq:tpa-output-modIK} and \eqref{eq:tpa-output-noModIK} are also immediate, since they simply add extra information to the
network. 
\item The transition rule \eqref{eq:tpa-input} is also immediate because the network of 
$\textit{TPA}_{n-1}$ and the intruder knowledge of $\textit{PA}_{m-1}$ contain the same number of messages. 
Clearly, we assume that the choice variables 
$\chV{t}$ and $\chV{AS}$ for time added by $\tpatopa$ are replaced by the very same information existing in $\textit{TPA}_{n-1}$. In this case, since we consider only successful states, the conditional expressions added by $\tpatopa$ are evaluated to true.
\item The transition rule \eqref{eq:time} is also immediate, since the global time is the only change. 
\item The transition rule \eqref{eq:tpa-new} is excluded by requiring it to be non-void. \qed
\end{itemize}
\end{proof}

\begin{proposition}[Soundness]\label{prp:snd}
Given a protocol $\caP$
and a successful realizable PA-state \textit{PA}, there exists a non-void TPA-state \textit{TPA} s.t. $\textit{TPA}\ \caH\ \textit{PA}$.
\end{proposition}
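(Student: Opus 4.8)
The plan is to prove soundness by induction on the length $m$ of the PA derivation reaching $\textit{PA}$, dualizing the induction on the TPA derivation used for completeness in Proposition~\ref{prp:comp}. If $m=0$, then $\textit{PA}=\{\emptyset \mid \{\emptyset\}\}$ and the empty non-void TPA-state $\{\emptyset \mid \{\emptyset\} \mid 0.0\}$ is related to it by $\caH$. If $m>0$, the induction hypothesis yields a non-void TPA-state $\textit{TPA}'$ with $\textit{TPA}'\ \caH\ \textit{PA}_{m-1}$, and I would case-split on the rule applied in the last PA step, building for each case one or more TPA steps that re-establish $\caH$ with $\textit{PA}_m=\textit{PA}$ and leave the final TPA-state non-void.

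The choice and output rules carry over directly. The explicit deterministic choices \eqref{eq:pa-detBranch1} and \eqref{eq:pa-detBranch2} and the non-deterministic choices \eqref{eq:pa-nonDetBranch1} and \eqref{eq:pa-nonDetBranch2} are mirrored by their timed counterparts \eqref{eq:tpa-detBranch1}, \eqref{eq:tpa-detBranch2}, \eqref{eq:tpa-nonDetBranch1} and \eqref{eq:tpa-nonDetBranch2}: neither the network nor the global clock is touched, $\tpatopax$ acts homomorphically on conditionals and choices, and so clause~(i) of $\caH$ is preserved. For the output rules \eqref{eq:pa-output-modIK} and \eqref{eq:pa-output-noModIK} I apply the timed rules \eqref{eq:tpa-output-modIK} and \eqref{eq:tpa-output-noModIK}, which record the current global clock $\bar{t}$ of $\textit{TPA}'$ as the sending time; I maintain as part of the invariant that this $\bar{t}$ equals the value already carried by the choice variable $\chV{t}$ in $\textit{PA}$. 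No Time step is needed first, because $\mathit{mte}$ evaluates to $0$ as soon as an output is enabled, so a principal may emit with no clock advance; clause~(ii) of $\caH$ is restored by the newly stored message.

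The input rule \eqref{eq:pa-input} is the crux and is where realizability enters. The timed input rule \eqref{eq:tpa-input} carries the side condition $\bar{t}=t'+d((\textit{ro}',k),(\textit{ro},i))$, so before firing it I must advance the global clock of $\textit{TPA}'$ up to the required reception instant through a run of Time steps \eqref{eq:time}. Here I invoke the Realizable PA-state Lemma together with Definition~\ref{def:realizable}: time sequence monotonicity ensures that $(\textit{ro},i)$ is a closest not-yet-served recipient of the stored message, which is exactly the process chosen by $\mathit{mte}$, while the triangular inequalities make the successive $\mathit{mte}$-increments consistent, so that the Time steps bring $\bar{t}$ precisely to $t'+d$ and \eqref{eq:tpa-input} becomes applicable; its application updates $\mathit{Net}$ so that clause~(ii) holds again. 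For the new-process rule \eqref{eq:pa-new}, which in PA both introduces a process and advances it one step by some action $A$, I split the simulation in two: first \eqref{eq:tpa-new} introduces the timed process at step $1$ (this intermediate state may be void, but only transiently), then the timed rule matching $A$ advances it, reusing the output, input, or choice treatment above. The resulting TPA-state has a non-\eqref{eq:tpa-new} last step and is thus non-void, as required.

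The main obstacle is reconciling the two notions of time. In the successful PA run the choice variables $\chV{t}$ and $\chV{AS}$ carry fixed real values that jointly satisfy all accumulated linear constraints, whereas in the TPA run the global clock is forced step by step by $\mathit{mte}$, which imposes a rigid closest-recipient-first interleaving of the receptions. The heart of the argument is to show that these two views coincide: that the concrete clock values produced by the $\mathit{mte}$-driven Time steps agree with the symbolic sending and reception times recorded in the stored messages of $\textit{PA}$, and that the PA ordering of inputs admits a compatible TPA interleaving. This is exactly what realizability provides --- the Realizable PA-state Lemma turns the constraint-satisfying PA input ordering into a TPA input ordering with $t_c<t_d$ --- and the delicate step is verifying that the clock value induced at each reception is precisely $t'+d$, so that the invariant feeding the output and input cases above is maintained throughout the induction.
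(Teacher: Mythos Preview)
Your proposal is correct and follows essentially the same approach as the paper: induction on the length of the PA derivation, with a case split on the last rule applied, handling \eqref{eq:pa-new} by two TPA steps (\eqref{eq:tpa-new} followed by the matching action). Your treatment is in fact more explicit than the paper's own proof, which lumps the input case together with the outputs and dispatches it as ``immediate'' on the grounds that the choice variables are instantiated with valid values because the PA-state is successful and realizable---you spell out the insertion of \eqref{eq:time} steps and the appeal to the Realizable PA-state Lemma that the paper leaves implicit.
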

\begin{proof}
By induction on the length $m$ of 

\noindent
\begin{align*}
\{ \emptyset ~|~ \{\emptyset\} \} &\longrightarrow_{(\textit{ro}_1,i_1,j_1,A_1,k_1)} \textit{PA}_1 \\
&\longrightarrow_{(\textit{ro}_2,i_2,j_2,A_2,k_2)} \textit{PA}_2 \\[-2ex]
&\hspace{3ex} \vdots\\[-2ex]
&\longrightarrow_{(\textit{ro}_{m},i_{m},j_{m},A_{m},k_{m})} \textit{PA}_{m} 
\end{align*}

\noindent
If $n=0$, then the conclusion follows.
If $n>0$, then the induction hypothesis says that there exists a sequence

\noindent
\begin{align*}
\{ \emptyset ~|~ \{\emptyset\}  ~|~ 0\} &\longrightarrow_{(\textit{ro}'_1,i'_1,j'_1,A'_1,k'_1,t'_1)} \textit{TPA}_1 \\
&\longrightarrow_{(\textit{ro}'_2,i'_2,j'_2,A'_2,k'_2,t'_2)} \textit{TPA}_2 \\[-2ex]
&\hspace{3ex} \vdots\\[-2ex]
&\longrightarrow_{(\textit{ro}'_{n-1},i'_{n-1},j'_{n-1},A'_{n-1},k'_{n-1},t'_{n-1})} \textit{TPA}_{n-1} 
\end{align*}

\noindent
such that $\textit{TPA}_{n-1}\ \caH\ \textit{PA}_{m-1}$.
Let us consider the transition rule used in the step $m$.
\begin{itemize}
\item Transition rules               \eqref{eq:pa-detBranch1},\allowbreak  \eqref{eq:pa-detBranch2},\allowbreak 
              \eqref{eq:pa-nonDetBranch1},\allowbreak  and \eqref{eq:pa-nonDetBranch2}  are immediate.
\item Transition rules               \eqref{eq:pa-output-modIK}, \eqref{eq:pa-output-noModIK}, and \eqref{eq:pa-input} are also immediate, since they simply add the message to the intruder knowledge and the choice variables $\chV{t}$ and $\chV{AS}$ added by $\tpatopa$ are replaced by valid values, since the PA-state is successful and realizable.
\item The transition rule \eqref{eq:pa-new} is also immediate because implies two transition steps from $\textit{TPA}_{n-1}$, one using the transition rule 
\eqref{eq:tpa-new} and the other one for the very same action of the step \eqref{eq:pa-new}. \qed
\end{itemize}
\end{proof}

\end{document}